\newcommand*{\audit}{Audit Join\xspace}
\newcommand*{\sparql}{\textsc{SPARQL}\xspace}
\newcommand*{\dbpedia}{DBpedia\xspace}
\newcommand{\eat}[1]{}
\def\e#1{\emph{#1}}
\def\rdftxt#1{\textsf{#1}} 
\def\path{$\gamma$}
\newcommand{\mparagraph}[1]{\vskip 0.4em\noindent\textbf{#1.}\,\,}
\def\partitle#1{\mparagraph{#1}}
\title{Efficiently Charting RDF}
\author{Oren Kalinsky}
\affiliation{\institution{Technion}
\city{Haifa}
\country{Israel}
\postcode{3200003}}
\author{Oren Mishali}
\affiliation{\institution{Technion}
\city{Haifa}
\country{Israel}
\postcode{3200003}}
\author{Aidan Hogan}
\affiliation{
\institution{University of Chile}
\city{Santiago}
\country{Chile}}
\author{Yoav Etsion}
\affiliation{\institution{Technion}
\city{Haifa}
\country{Israel}
\postcode{3200003}}
\author{Benny Kimelfeld}
\affiliation{\institution{Technion}
\city{Haifa}
\country{Israel}
\postcode{3200003}}
\begin{document}

\begin{abstract}
 We propose a visual query language for interactively exploring large-scale knowledge graphs. Starting from an overview, the user explores bar charts through three interactions: class expansion, property expansion, and subject/object expansion. A major challenge faced is performance: a state-of-the-art SPARQL engine may require tens of minutes to compute the multiway join, grouping and counting required to render a bar chart. A promising alternative is to apply approximation through online aggregation, trading precision for performance. However, state-of-the-art online aggregation algorithms such as Wander Join have two limitations for our exploration scenario: (1) a high number of rejected paths slows the convergence of the count estimations, and (2) no unbiased estimator exists for counts under the distinct operator. We thus devise a specialized algorithm for online aggregation that augments Wander Join with exact partial computations to reduce the number of rejected paths encountered, as well as a novel estimator that we prove to be unbiased in the case of the distinct operator. In an experimental study with random interactions exploring two large-scale knowledge graphs, our algorithm shows a clear reduction in error with respect to computation time versus Wander Join. 
\end{abstract}

\maketitle

	\begin{CCSXML}
		<ccs2012>
		<concept>
		<concept_id>10010520.10010553.10010562</concept_id>
		<concept_desc>Computer systems organization~Embedded 
systems</concept_desc>
		<concept_significance>500</concept_significance>
		</concept>
		<concept>
		<concept_id>10010520.10010575.10010755</concept_id>
		<concept_desc>Computer systems 
organization~Redundancy</concept_desc>
		<concept_significance>300</concept_significance>
		</concept>
		<concept>
		<concept_id>10010520.10010553.10010554</concept_id>
		<concept_desc>Computer systems 
organization~Robotics</concept_desc>
		<concept_significance>100</concept_significance>
		</concept>
		<concept>
		<concept_id>10003033.10003083.10003095</concept_id>
		<concept_desc>Networks~Network reliability</concept_desc>
		<concept_significance>100</concept_significance>
		</concept>
		</ccs2012>
	\end{CCSXML}
	
	\ccsdesc[500]{Computer systems organization~Embedded systems}
	\ccsdesc[300]{Computer systems organization~Redundancy}
	\ccsdesc{Computer systems organization~Robotics}
	\ccsdesc[100]{Networks~Network reliability}

	\keywords{ACM proceedings, \LaTeX, text tagging}

    \section{Introduction}


A variety of prominent knowledge graphs have emerged in recent years, including DBpedia~\cite{Bizer:2009:DCP:1640541.1640848}, Freebase~\cite{BollackerEPST08}, Wikidata~\cite{VrandecicK14}, and YAGO~\cite{HoffartSBW13} covering multiple domains, LinkedGeoData~\cite{10.1007/978-3-642-04930-9_46} for geographic data, LinkedMDB~\cite{Consens08} for movie information, and LinkedSDMX~\cite{CapadisliAN15} for financial and geopolitical data. A number of companies have also announced the creation of proprietary knowledge graphs to power a variety of end-user applications, including Google,\footnote{\url{https://googleblog.blogspot.com/2012/05/introducing-knowledge-graph-things-not.html}} Microsoft,\footnote{\url{https://blogs.bing.com/search-quality-insights/2017-07/bring-rich-knowledge-of-people-places-things-and-local-businesses-to-your-apps}} Amazon\footnote{\url{https://blog.aboutamazon.com/innovation/making-search-easier}} and eBay\footnote{\url{https://www.ebayinc.com/stories/news/cracking-the-code-on-conversational-commerce/}}, among others.

Due to their scale and diversity, a major challenge faced when considering a knowledge graph is to understand what content it contains: what sorts of entities it describes, what sorts of relations are represented, how extensive the coverage of particular domains is, etc. Prominent knowledge graphs, such as DBpedia~\cite{Bizer:2009:DCP:1640541.1640848}, Freebase~\cite{BollackerEPST08}, Wikidata~\cite{VrandecicK14}, contain in the order of tens of millions of nodes and billions of edges represented using thousands of classes and properties, spanning innumerable different domains. While a variety of approaches have been proposed to summarize or profile the content of such graphs~\cite{EllefiBBDDST18}, the general trend is towards either computing statistics and summaries offline, or relying on off-the-shelf query engines.

In this paper, we propose a conceptual approach and techniques for interactive exploration of large-scale knowledge graphs through a visual query language. This query language captures user interactions that follow Shneiderman's principle for effective data visualization and exploration: ``\e{overview first, zoom and filter, then details-on-demand}''~\cite{Schneiderman96}. The result of a query is a bar chart over a set of focus nodes that are defined iteratively by the user via three interactions: \e{class expansion}, which focuses on the sub-classes of a selected class bar, \e{property expansion}, which focuses on the properties defined on instances of a class, and \e{subject/object expansion}, which focuses on entities in the source/target of a given property. At each stage, the focus nodes of the current bar chart can be filtered by a search condition. Each interaction constitutes a visual exploration step, with the sequence of interactions captured by the query language.

Given the size and diversity of prominent knowledge graphs, the number of (intermediate) results that can be generated by queries, and the goal of supporting interactive exploration, a major challenge faced is that of performance. In preliminary experiments with Virtuoso~\cite{Erling12}---a state-of-the-art SPARQL query engine---we found, for example, that computing the distribution of properties over all nodes in DBpedia takes over 5 minutes; such runtimes preclude the possibility of interactive exploration. 

To face the critical performance problem, we investigate two orthogonal approaches. First, we explore the deployment of a query engine from the recent breed of \e{worst-case optimal} join algorithms~\cite{DBLP:journals/jacm/NgoPRR18}, in order to avoid an explosion of intermediate results when processing multiway joins over large graphs; for these purposes, we select the Cache Trie Join algorithm~\cite{DBLP:conf/edbt/KalinskyEK17} to evaluate queries. Second, with the intuition that precise counts are not always required, we explore \e{online aggregation} algorithms~\cite{Hellerstein:1997:OA:253260.253291} that trade precision for performance, computing approximate counts at a fraction of the cost observed even in the worst-case optimal setting; for these purposes, we select the Wander-Join algorithm~\cite{DBLP:conf/sigmod/0001WYZ16}. In essence, Wander Join applies a random walk between database tuples that (jointly) match the join query, and upon termination, updates an estimator of the aggregate function.

Ultimately, inspired by the work of Zhao et al.~\cite{Zhao:2018:RSO:3183713.3183739}, we conclude that these two approaches are complementary. We offer an algorithm that combines online aggregation with exact computation. The general idea to apply the random walk of Wander Join, and at each step, consider replacing the remaining walk with a precise computation of the space of possible suffixes, this time using Cached Trie Join. This consideration is done via an estimate of selectivity. The estimator needs to be updated accordingly, and we prove that it remains unbiased. Furthermore, we extend our algorithm to estimate counts in the presence of the distinct operator, which is crucial to our exploration use case. We call the resulting algorithm Audit Join, and prove that it provides unbiased estimators of counts, with and without the distinct operator. In experiments that evaluate randomly-generated exploration queries over two knowledge graphs, we show that our algorithm dramatically reduces error with respect to the computation time when compared with Wander Join.

\subsection*{Contributions}
Our contributions are summarized as follows.
\begin{itemize}
\item We propose a formal model of an exploration approach over knowledge graphs.
\item We describe a system implementation of the exploration model.
\item We devise Audit Join---a specialized online-aggregation algorithm for the backend of our proposed model, and prove that it produces unbiased estimations of counts.
\item We describe an experimental study of performance over random explorations, showing the benefits of Audit Join over the state of the art.
\end{itemize}
	\section{Related Work}
We now give an overview of related work, focusing on two aspects:
exploration tools for knowledge graphs, and relevant algorithms for 
query evaluation.

\subsection*{Exploration Tools}
A variety of approaches have been proposed in recent years for
exploring and visualizing graph-structured
data~\cite{journals/semweb/DadzieR11}.
	
\paragraph{Faceted Browsing:} Among the most popular approaches that
have been studied for exploring knowledge graphs is that of
\textit{faceted browsing}, where users incrementally add restrictions
-- called \textit{facets} -- to restrict the current
results~\cite{TzitzikasMP17}. Early works mainly focused on smaller,
domain-specific graphs, among which we mention the mSpace
system~\cite{SchraefelWRS06} in the multimedia domain,
BrowseRDF~\cite{OrenDD06} in the crime domain, 
/facet~\cite{HildebrandOH06} and
Ontogator~\cite{MakelaHS06} in the art domain, or more
recently, ReVeaLD~\cite{KamdarZHDD14} in the biomedical
domain, and Hippalus~\cite{TzitzikasBPMN16} in the zoology
domain. Such works typically have dealt with smaller-scale and/or
homogeneous graphs with few classes and properties, focusing on
usability and expressivity rather than issues of scale or performance.
	
However, with the growth of large-scale multi-domain knowledge graphs
like DBpedia, Freebase or Wikidata, a number of faceted browsers have
been proposed that support thousands of classes and properties and
upwards of a hundred million edges, as needed for such datasets. Among
these systems, we can mention Neofonie~\cite{HahnBSHRBDS10},
Rhizomer~\cite{BrunettiGA13},
SemFacet~\cite{ArenasGKMZ16},
Semplore~\cite{WangLPFZTYP09} and
Sparklis~\cite{Ferre14} for exploring DBpedia;
Broccoli~\cite{BastB13} and Parallax~\cite{HuynhK09}
for exploring Freebase; and GraFa for exploring
Wikidata~\cite{MorenoH18}. Of these systems, many do not present
runtime performance
evaluation~\cite{HuynhK09,HahnBSHRBDS10,WagnerLT11,BrunettiGA13},
delegate query processing to a general-purpose query
engine~\cite{HuynhK09,HeimEZ10,BrunettiGA13,Ferre14}, apply a manual
selection of useful facets or a subset of
data~\cite{HahnBSHRBDS10,ArenasGKMZ16}, and/or otherwise rely on a
materialization approach to cache meta-data (such as
counts)~\cite{BrunettiGA13,BastB13,MorenoH18}. Compared to such works,
we focus on scalability and performance; more concretely, we propose a
novel query engine specifically optimized for the types of online
aggregation queries needed by such systems.
	
\paragraph{Graph Profiling:} While faceted browsing aims to allow
users to express and answer specific questions in an intuitive manner,
other works have focused on the problem of summarizing the content of
a large knowledge graph: to provide users insights as to what the
graph does or does not contain, what are the relationships between
entities, what are the most common types, and so
forth~\cite{EllefiBBDDST18}.
	
One approach to provide users with an overview of a knowledge graph is
to compute a \textit{graph summary} or \textit{quotient
  graph}~\cite{CebiricGM15}, which groups nodes into super-nodes,
between which the most important relations are then summarized. The
conceptual summarization can be conducted by a number of techniques,
including, for example, variations on the idea of
bisimulations~\cite{SchatzleNLP13,PicalausaFHV14,ConsensFKP15,BunemanS16},
formal concept
analysis~\cite{dAquinM11,KirchbergLTLKL12,HaceneHNV13,AlamBCN15,GonzalezH18},
semantic
types~\cite{schemamap,CampinasPCDT12,DudasSM15,FlorenzanoPRV16a},
etc. Other such works rather focus on generating statistical summaries
of large graphs, in terms of the most popular classes, properties,
etc., generating bar charts and other (possibly interactive)
visualizations~\cite{AuerDML12,AbedjanGJN14,Mihindukulasooriya15,FrischmuthMTRA15,BikakisPSS17,PrincipeSPRPM18}. While
such works tackle a variety of different use-cases using a diverse
collection of techniques, all are founded on aggregation operations
applied to nodes and relationships -- either offline or using
general-purpose query engines -- generating high-level descriptions of
the graph. The aggregation algorithms we propose are online and more
efficient than general-purpose query algorithms, and could be readily
adapted to the various use-cases explored in the aforementioned
literature.

\subsection*{Query Engines}

We provide an overview of approaches for querying knowledge graphs that most closely relate to this work.

\paragraph{SPARQL Engines:} A variety of query languages have been proposed for graphs~\cite{AnglesABHRV17}. Among these, SPARQL~\cite{SPARQL} is the standard language for querying RDF graphs, and is used, for example, by public query services over the DBpedia, LinkedGeoData, and Wikidata knowledge graphs. While several query engines support
SPARQL (e.g.,~\cite{Neumann:2008:RRE:1453856.1453927,Bishop:2011:OFS:2019470.2019472,bigdata}), in this paper we take Virtuoso~\cite{Virtuoso} as a baseline query engine given its competitiveness in a number of benchmarks~\cite{Berlin,AlucHOD14}. Virtuoso maintains clustered indexes in various (redundant) orders needed to support efficient lookups on RDF graphs; these clustered indexes support both row-wise and column-wise operations where, for example, a row-wise index can be used to find a particular row from which to start reading values from a given column. To optimize for aggregate-style queries, Virtuoso applies vectorized execution on a column represented as a compressed vector of values.

\paragraph{Worst-Case-Optimal Joins:} A number of 
 worst-case optimal join algorithms have been
developed in recent years~\cite{DBLP:conf/icdt/Veldhuizen14,
  AboKhamis:2016:FQA:2902251.2902280, Ngo:2014:BWA:2594538.2594547,
  Aberger:2017:ERE:3155316.3129246, DBLP:conf/edbt/KalinskyEK17}. These algorithms evaluate join queries with a runtime that meets the Atserias--Grohe--Marx (AGM)
bound~\cite{Atserias:2008:SBQ:1470582.1470622}, which, given a join query, provides a worst-case tight bound for the size of the output. It was shown that these algorithms are not only theoretically better than traditional approaches, they are also empirically superior on graph query patterns joining relations with low dimension~\cite{Aberger:2017:ERE:3155316.3129246,
  DBLP:conf/edbt/KalinskyEK17, Nguyen:2015:JPG:2764947.2764948}. In this paper, we will adopt Cached Trie Join -- a state-of-the-art worst-case optimal join algorithm -- and contrast it with Virtuoso for answering aggregate queries generated by our exploration system; we subsequently combine this algorithm with online aggregation techniques to trade precision for performance.
	
\paragraph{Online Aggregation:} Algorithms for online aggregation provide approximate results that converge over time to the exact aggregate queries.  Since the concept was coined by Hellerstein et al.~\cite{Hellerstein:1997:OA:253260.253291} this class of algorithms grew to support additional operators and better statistical guarantees~\cite{Haas:1997:LDC:646496.695465}, as well as distributed and parallel support~\cite{DBLP:journals/pvldb/PansareBJC11,DBLP:conf/ssdbm/QinR13, DBLP:journals/dpd/QinR14}. While the solution was originally for a single table, Haas et al.~\cite{Haas:1999:RJO:304182.304208} developed Ripple Join, an online aggregation algorithm that supports joins. More	recently, Li et al.~\cite{DBLP:conf/sigmod/0001WYZ16} have introduced the Wander Join algorithm for online aggregation over join results. Wander Join uses random walks over the indexes of the joined tables to sample results. Paths that cover all joins are considered valid sampled results, while partial paths constitute rejected samples; online aggregation can then be applied on these samples as they are collected. Wander Join has also been used as an unbiased sampling method for approximate query answering~\cite{DBLP:journals/pvldb/GalakatosCZBK17} and join-size estimation~\cite{DBLP:conf/sigmod/ChenY17}. Zhao et al.~\cite{Zhao:2018:RSO:3183713.3183739} used Wander Join to precompute initial join size estimations for the problem of uniform sampling from a join query. Their sampling algorithm uses weighted random walks where the estimation has a high confidence level and exact computations otherwise. The samples are used to improve the join size estimation for better uniform sampling. We describe Wander Join in more detail in Section~\ref{sec:engine}. 



In this paper, we use Cached Trie Join to reduce the rejection rate of Wander Join for selective patterns; we further prove that this strategy provides an unbiased estimator of counts with and without a distinct operator, where, to the best of our knowledge, no existing online aggregation algorithm offers unbiased estimators in the distinct case. 

	
	\eat{
	Traditional query evaluation engines strive to provide exact results in the shortest amount of time. However, the length of time is unknown and can easily be longer than the attention span of a user. A different approach was offered by Hellerstein et al. called \e{Online Aggregation}~\cite{Hellerstein}. The main idea is to provide approximate results that improve over time. This allows a tradeoff between the accuracy and run-time, while still allowing better quality results for patient users. For queries on a single table, such as \e{SELECT COUNT(object) FROM graph WHERE predicate = 'typeOf'}, this method continuously uniformly sample from the table \e{graph} and computes the average over the sampled results and scaling it up to provide the estimation for the COUNT.	The solution supports other aggregation functions such as \e{AVG} and \e{SUM}. 
	
	~\cite{Haas:1997:LDC:646496.695465} offer statistical guarantees in the form of confidence levels, as well as support for GROUP BY and DISTINCT. GROUP BY is supported by maintaining a estimators for every value of the grouping variable, which acts as an unbiased aggregation estimator for each group. In order to support the DISTINCT operator, the algorithm saves the sampled tuples and reject new samples with a distinct-parameter value that already appears in the sample set. 
	
	For the case of join queries over multiple tables, Haas. et al have suggested Ripple Join~\cite{Haas:1999:RJO:304182.304208}. Ripple Join is a generalization of the nested-loop join. It samples uniformly from each table in a round-robin manner and applies the join for each new sample. Scaling the mean result of each sample, offers an unbiased estimator for the aggregation. The same solutions for GROUP BY and DISTINCT were also applied to Ripple Join. While there are several extensions to the Ripple Join algorithm, Wander Join~\cite{DBLP:conf/sigmod/0001WYZ16} by Li et al. provides the best empirical results. Wander Join incorporates random walks over the indexes of the tables, trying to find matches for all the tables in the query. At each step, the algorithm samples a tuple and looks for a matches in the next table in a predefined join order. A sample is taken from the matches and the algorithm continues to the next table until it fails to find a match or finding a valid match for the last table. This sampling method is biased, which is fixed by Wander Join to provide an unbiased estimator. Wander Join builds on the statistical tools of Ripple Join, as well as adopting the same approach to the GROUP BY operator. We describe Wander Join in depth in Section~\ref{sec:engine}
	}
	\begin{figure}[t]
\centering
\includegraphics[width=0.4\textwidth]{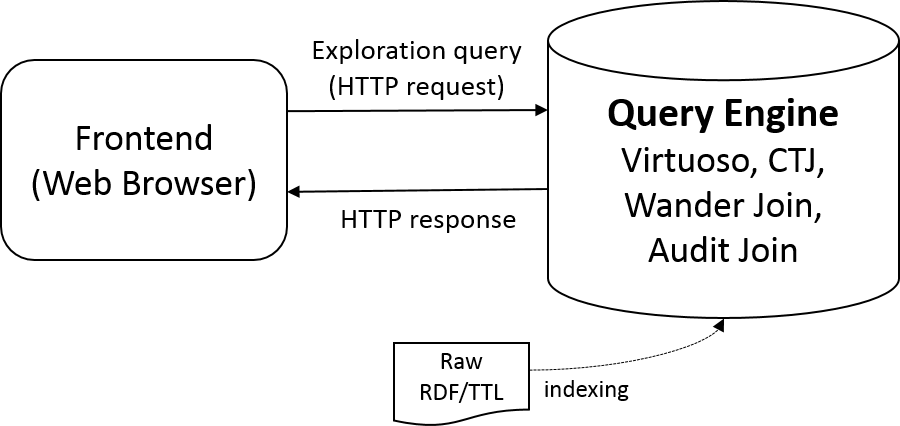}
	\caption{General system architecture}
	\label{fig:arch}
\end{figure}

\begin{figure*}[t]
	\includegraphics[width=\textwidth]{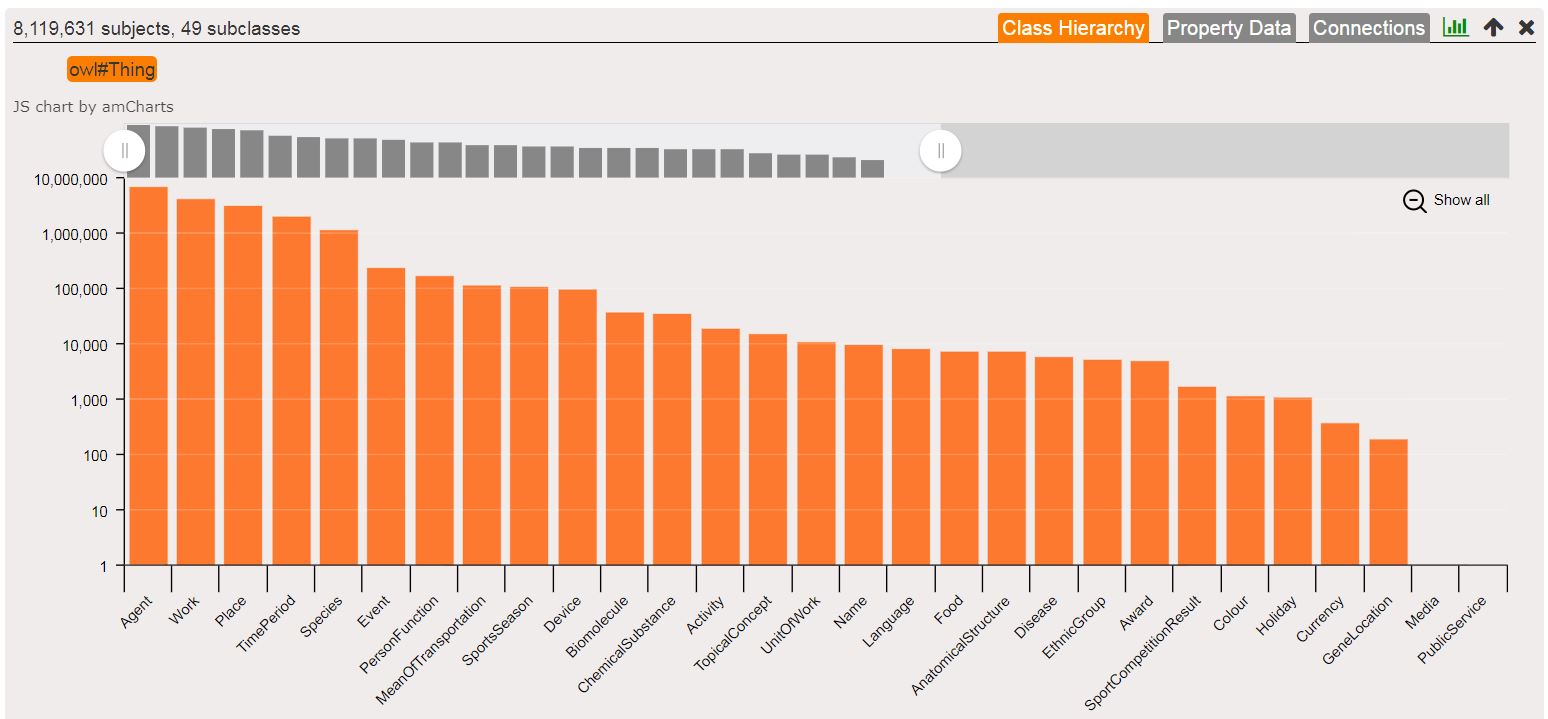}
	\caption{Initial chart in the exploration pane over \dbpedia.}
	\label{fig:dbpedia_first}
\end{figure*}
\section{System Overview} \label{sec:system} 

The focus of this work is on our formal exploration model, and its 
efficient realization as an interactive system. To give the 
intuition behind our approach, we begin with a high-level description of our
implemented system;\footnote{The system
  has been demonstrated and described in a demo paper~\cite{elinda}.} in the next sections, we delve 
into the formal model and algorithms. 
Our system offers online exploration of large-scale knowledge graphs and is implemented as a Web application that communicates with a
specialized query engine, as illustrated in Figure~\ref{fig:arch}. 

Currently our system supports exploration of an RDF graph: a set of RDF triples of the form $(s,p,o)$ where $s$ is called the \textit{subject}, $p$ is called the \textit{predicate}, and $o$ is called the \textit{object}. An RDF graph can thus be considered as a directed edge-labeled graph in which each triple encodes an edge $s \xrightarrow{p} o$. Nodes in this RDF graph may be instances of \textit{classes} (e.g. \rdftxt{Person}, \rdftxt{Movie}, etc.) where these classes may be further organized into a \textit{subclass hierarchy} (e.g., defining \rdftxt{Movie} to be a subclass of \rdftxt{Work}). We further refer to terms used in the predicate position of a triple (e.g., \rdftxt{director}) as \textit{properties}. We will provide more detail on RDF graphs in Section~\ref{sec:engine}.

During exploration of the graph, the Web application generates queries that are sent via HTTP to the backend. In this backend, the system can use any query engine that supports aggregate queries over graph patterns (more specifically, count and count--distinct operations over groupings of results for join queries). Our system currently implements four alternative query engines, which we describe in more detail in
Section~\ref{sec:engine}: Virtuoso~\cite{Virtuoso}, Cache
Trie Join (CTJ)~\cite{DBLP:conf/edbt/KalinskyEK17}, Wander
Join~\cite{DBLP:conf/sigmod/0001WYZ16}, and \audit---our bespoke
algorithm for online aggregation that we describe in Section~\ref{sec:engine}. 

The user experience is visual, and no \sparql knowledge is required
from the user. In principle, the user should have only a basic
understanding of what classes and properties are. Next, we provide an overview of the user interface, and
illustrate the concept and functionality of our system through an example
exploration over the \dbpedia
dataset~\cite{Bizer:2009:DCP:1640541.1640848}.

\subsection{User Interface}
The basic UI component is a \e{tabbed pane}, as illustrated in
Figure~\ref{fig:dbpedia_first}. Each tab in the pane presents a
specific \e{bar chart}, which is the result of an \e{expansion}
applied on a bar of a previous pane. (We present our formal
exploration model in Section~\ref{sec:model}.)  The tab in
Figure~\ref{fig:dbpedia_first} shows the initial bar chart for
\dbpedia. The bar chart visualizes the distribution of all \dbpedia
subjects (instances of class \rdftxt{owl:Thing}) into subclasses. Each
bar matches a specific subclass, with a height proportional to its
number of instances. The bars are sorted by decreasing
height. Hovering over a bar opens a pop-up box with basic information
such as the number of instances, and the number of direct and indirect
subclasses. To support visualization of large charts with many classes
or properties, a widget allows to control the visible part of the
chart.

The user can then navigate down the class hierarchy in order to focus
the exploration on a set of instances in a class of interest. Class navigation is done by
clicking a bar, which opens a new pane under the current one. (In cases where top-down class navigation is less intuitive, our system offers an autocomplete search box for class types, based on a list
that is populated by collecting all subjects of type
\rdftxt{owl:Class} or \rdftxt{rdfs:Class}. Selecting a class in this way immediately opens the associated pane without the need to drill down.)

A second tab in the pane shows a \e{property chart}---the result of
what we call a \e{property expansion}, as illustrated in
Figure~\ref{fig:dbpedia_props}. Here, each bar represents a specific property and the count represents distinct elements of the current focus set with some value for that property. By default, properties that emanate from the
focus set (\e{outgoing properties}) are shown, but the user may
switch to displaying the \e{incoming properties} for the current focus set. Bars are then sorted by \e{coverage}---the percentage of
focus elements that have the property as outgoing or incoming,
respectively. The number of properties may be very large, and
therefore, our system supports filtering out properties with a coverage lower
than a threshold adjustable by the user.
For example, in Figure~\ref{fig:dbpedia_props}, only 57 properties out of 722 possible
properties are shown.

Getting general statistics about the dataset and its classes is
essential, yet a user may be interested in looking into the specific
instances of the dataset as well. For that, a \e{data table} that
appears below the property chart shows the values of selected
properties (bars) in the chart. The \sparql query used to generate the
data table may be retrieved by the user for downstream
consumption. \e{Data filters} attached to table columns may restrict
the displayed data.

\begin{figure*}[t]
\centering
\includegraphics[width=0.92\textwidth]{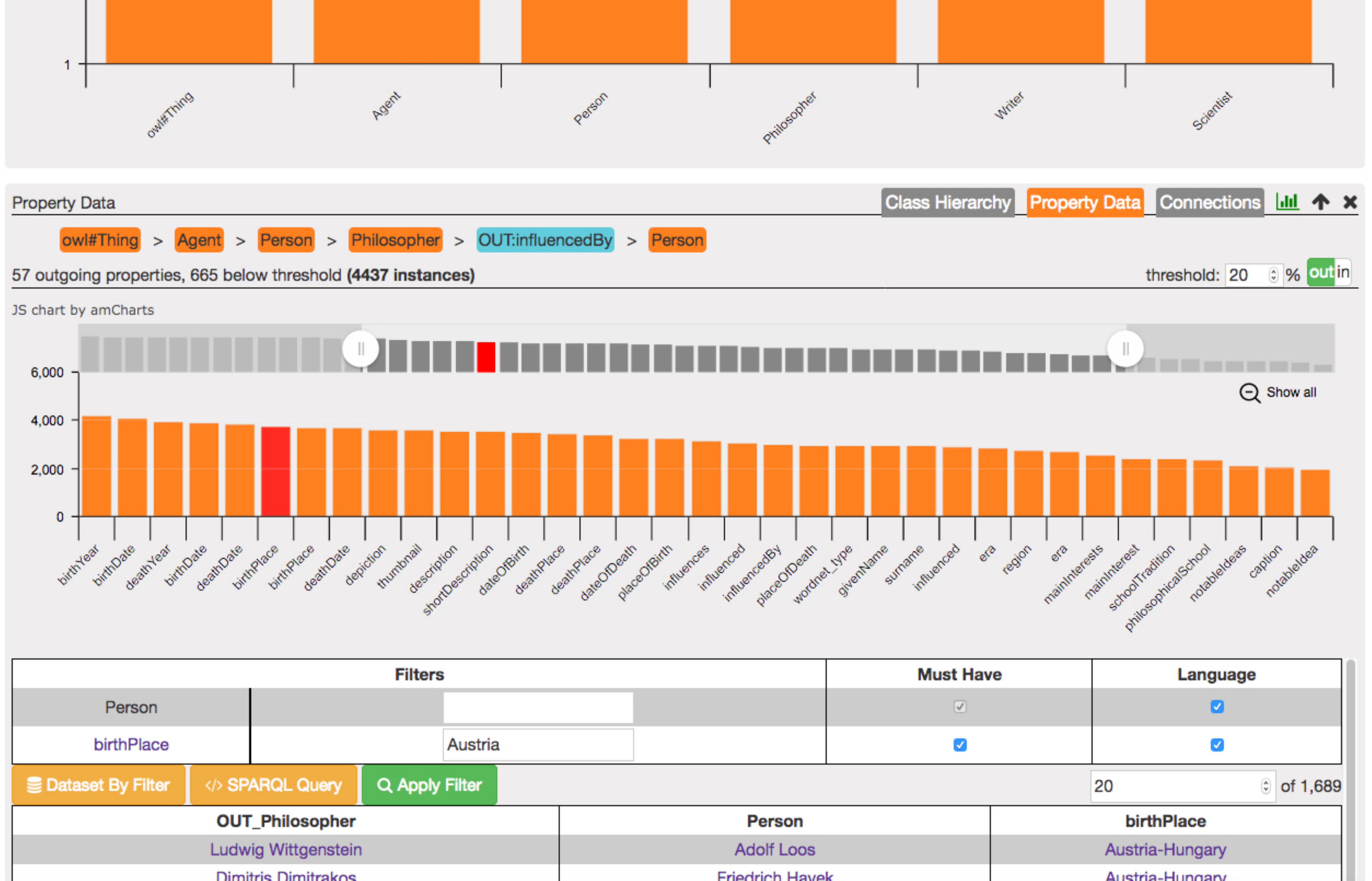}
	\caption{Two exploration panes over \dbpedia (upper is partially visible). Lower pane shows property data about \e{persons} who have influenced \e{philosophers}.}
	\label{fig:dbpedia_props}
\end{figure*}

\subsection{Illustrating Example}
To illustrate the system, consider the following scenario. Suppose
that the user is interested in philosophers, and in particular, they
wish to learn about people who have influenced philosophers.  The
exploration starts by navigating to the class \rdftxt{Philosopher}. It
is done by opening three subsequent panes: \rdftxt{Agent}
$\rightarrow$ \rdftxt{Person} $\rightarrow$
\rdftxt{Philosopher}. Then, switching to the property chart reveals
the most significant outgoing properties that philosophers have in \dbpedia,
one of them being \rdftxt{influencedBy}.
Selecting this property and
applying an \e{object expansion} opens a new pane, showing the different class types connected to philosophers via the \rdftxt{influencedBy} property. Clicking the \rdftxt{Person} type reveals the pane shown in Figure~\ref{fig:dbpedia_props}, with instances of type \rdftxt{Person} connected to philosophers (via that property). This allows the user to further explore \e{only} people who have influenced philosophers and not the entire \rdftxt{Person} set. 
Furthermore, a
\e{filter} allows to restrict the bar chart to the people born in
\e{Austria} (who have influenced philosophers), as shown in the
figure.

	\def\uris{\mathbf{U}} 
\def\lit{\mathbf{L}} 
\def\e#1{\emph{#1}}
\def\rdftxt#1{\textsf{#1}} 
\def\Qcs{Q_{\mathrm{cs}}}
\def\Qp{Q_{\mathrm{p}}} 
\def\Qc{Q_{\mathrm{c}}} 
\def\angs#1{\langle #1\rangle} 
\def\set#1{\{#1\}}\ 
\def\B{\mathbf{B}} 
\def\labels{\mathrm{labels}}
\def\init{\mathsf{init}}
\def\expansion#1{\vskip0.5em\par\noindent\underline{#1:}\,\,}

\section{Exploration Model}\label{sec:model} 
In this section, we describe our formal framework for exploring
knowledge graphs. We begin with an intuitive overview.

\subsection{General Idea}
The formal model underlying our exploration language iteratively applies the basic principle for effective data visualization by
Shneiderman~\cite{Schneiderman96} mentioned previously: ``Overview first, zoom and filter,
then details-on-demand.'' Specifically, our formal model is based on bar charts
over focus sets of nodes (URIs) that are constructed incrementally by
the user. The model is based on the following components.
\begin{itemize}
\item A \e{bar chart} consists of a set of \e{bars}, each representing
  a portion of the focus set. Figure~\ref{fig:dbpedia_first}, for
  example, depicts a bar chart in our implemented system.
\item The user selects a bar from the bar chart, and applies an
  \e{expansion operation} that transforms a bar into a new bar chart;
  the portion of the selected bar becomes the focus set of the new bar
  chart.
\item Another operation is \e{filtering} that can be applied to
  restrict the bar chart (and each bar within) according to a Boolean
  criterion over the focus set. Hence, this operation transforms one
  bar chart into another one.
\end{itemize}
The user can then continue the exploration of the new bar chart, and
hence, construct focus sets of arbitrary depths.  In what follows, we
give a formal definition of the data and exploration model.

\subsection{Formal Framework}
We now present the formal framework.
\partitle{RDF graphs} We adopt a standard model of RDF data (omitting blank nodes for brevity).
Specifically, we assume collections $\uris$ of \e{Unique Resource
  Identifiers} (URIs) and $\lit$ of \e{literals}. An \e{RDF triple},
is an element of $\uris\times\uris\times(\uris\cup\lit)$. An \e{RDF
  graph} is a finite collection $G$ of RDF triples.  In the remainder
of this section, we assume a fixed RDF graph $G$.  A URI $u$ is said
to be \e{of class} $c$ if $G$ contains the triple
$(u,\rdftxt{rdf:type},c)$. One could also define 
membership in a class by joining the \rdftxt{rdf:type} value with the \e{transitive/reflexive closure} on subclasses; the choice between the two is orthogonal to our model.

\begin{figure}[b]
\input{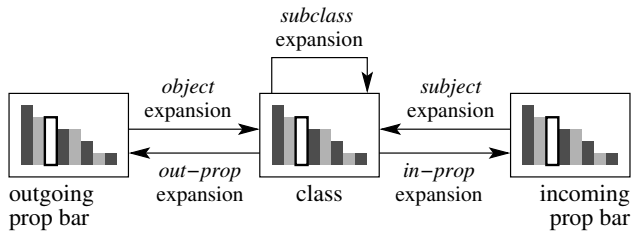}
\caption{State transitions in the exploration model\label{fig:exps}}
\end{figure}

\partitle{Bar charts}
We model the visual exploration of the RDF graph $G$ by means of bar
charts that are constructed in an iterative, interactive manner. We
have three kinds of bars:
\begin{itemize}
\item A \e{class bar} represents URIs with a common class (e.g.,
  the class $\rdftxt{Person}$).
\item An \e{outgoing-property bar}, or \e{out-property bar} for short,
  represents URIs that are the subject (source) of a common associated outgoing
  property (e.g., subjects of $\rdftxt{locatedIn}$ triples).
\item Analogously, an \e{incoming-property bar}, or \e{in-property
    bar} for short, represents URIs that are the object (target) of a
  common associated incoming property (e.g., objects of $\rdftxt{locatedIn}$
  triples).
\end{itemize}
For a bar $B$, we denote by $\uris(B)$ the set of URIs represented by
$B$. The \e{category} of $B$ is the corresponding class or property,
depending on the kind of $B$. A \e{bar chart} (or simply ``\e{chart}'' in what follows) is a mapping from
categories to bars.

\partitle{Bar expansion}
A \e{bar expansion} is a function $E$ that transforms a given bar $B$
into a chart $E(B)$.  We define specific bar expansions $E$ that are
also implemented in our system.  As a consequence, we arrive at a 
transition system between chart types, as depicted in Figure~\ref{fig:exps}.
\expansion{Subclass expansion} This expansion is allowed only on class
bars $B$; in this case, the category $c$ of $B$ is a class. The
categories of the chart $E(B)$ are all the \e{subclasses} of $c$; that
is, the URIs $c'$ such that $G$ contains the triple
$(c',\rdftxt{rdfs:subClassOf},c)$.  The bar $B_{c'}$ that $E(B)$ maps
to $c'$ is a class bar with the category $c'$, and the set
$\uris(B_{c'})$ consists of all the URIs $u\in\uris(B)$ such that $u$
is of the class $c'$. 

\expansion{Out-property expansion} This expansion is only allowed on
a class bar $B$. The categories of the chart $E(B)$ are the outgoing
properties of the URIs of $B$; that is, the URIs $p$ such that $G$
contains $(s,p,o)$ for some $o$ and $s\in\uris(B)$. The bar $B_{p}$ that
$E(B)$ maps to $p$ is an out-property bar with category $p$. The
set $\uris(B_p)$ consists of all URIs $s\in\uris(B)$ with the
property $p$; that is, all $s\in\uris(B)$ such that $(s,p,o)\in G$ for
some $o$.

\expansion{In-property expansion} This expansion is analogous to the
out-property expansion, except that the bar $B_{p}$ mapped to $p$ by $E(B)$ is an \e{in-property} bar with the category $p$, and
$\uris(B_p)$ consists of all URIs $o\in\uris(B)$ that have the
incoming property $p$; that is, all $o\in\uris(B)$ such that
$(s,p,o)\in G$ for some $s$.

\expansion{Object expansion} This expansion is enabled only for
out-property bars $B$; recall that, in this case, the category of $B$
is a property $p$. The categories of the chart $E(B)$ are the classes
$c$ of the \e{objects} that are connected to the URIs in $\uris(B)$
through the property $p$; that is, the classes $c$ such that for some
triple $(s,p,o)\in G$ it is the case that $s\in\uris(B)$ and $o$ is of
class $c$. The bar $B_{c}$ that $E(B)$ maps to $c$ is a class bar with
the category $c$, and $\uris(B_{c})$ consists of the $p$-targets of
type $c$; that is, all URIs $o$ of class $c$ such that $(s,p,o)\in G$
for some $s\in\uris(B)$.

\expansion{Subject expansion} This expansion is analogous to the
object expansion, but considers the subjects of incoming properties
rather than the objects of outgoing properties. Specifically, this
expansion is enabled only for in-property bars $B$, where the
category of $B$ is a property $p$. The categories of $E(B)$ are the
classes $c$ of the \e{subjects} that are connected to the URIs in
$\uris(B)$ through the property $p$; that is, the classes $c$ such
that for some triple $(s,p,o)\in G$ it is the case that $o\in\uris(B)$
and $s$ is of class $c$. The bar $B_{c}$ that $E(B)$ maps to $c$ is a
class bar with the category $c$, and $\uris(B_{c})$ consists of the
$p$-sources of type $c$; that is, all URIs $s$ of class $c$ such that
$(s,p,o)\in G$ for some $o\in\uris(B)$.

\partitle{Exploration}
Our model enables the exploration of $G$ by allowing the user to
construct a list of bar charts in sequence, with each successive chart exploring a bar of
the previous chart. The exploration begins with a predefined
\e{initial chart} that we denote by $\B_0$. In our implementation,
this bar has the form $E(B)$ where $E$ is the subclass expansion and
$B$ is a bar that consists of all URIs of a predefined class; a
sensible choice for this class is a general type such as
$\rdftxt{owl:Thing}$. By \e{exploration} we formally refer to a
sequence of the form
\[\B_0\mapsto(c_1,E_1)\mapsto\B_1\,,\,(c_2,E_2)\mapsto\B_2\,,\,\dots\,,\,(c_{m},E_{m})\mapsto\B_m\]
where each chart $\B_i$ is obtained by selecting the bar $B$ of
category $c_i$ from the chart $\B_{i-1}$ and applying to $B$ the
expansion $E_i$ (assuming $E_i$ is allowed on $B$).

\partitle{Filtering}
In addition to the collection of expansion operations, the model and implemented
system also allow for \e{filtering} conditions, such as ``restrict to
all URIs $s$ who were born in Africa''; that is, for some $c$ it is the
case that $(s,\rdftxt{bornIn},c)\in G$ and
$(c,\rdftxt{name},\texttt{`Africa'})\in G$. Formally, a condition is
abstracted simply as a subset $F$ of $\uris$; when applying $F$ to a
chart $\B$, the resulting chart is obtained from $\B$ by restricting
$\uris(B)$ to $F\cap\uris(B)$ for every bar $B$.

\begin{example}
For illustration, we consider another \dbpedia scenario, now expressed in the terminology of the formal exploration model. Suppose that the user is interested in understanding what information \dbpedia has on cities where scientists were born. 
The initial bar chart shown (Figure \ref{fig:dbpedia_first}) is the result of a subclass expansion applied to the bar $B$ that corresponds to 
(i.e., $\uris(B)$ consists of)  all URIs of type \rdftxt{owl:Thing}. A total of 49 top-level classes (bars) are shown, where the user may observe that the most popular classes are \rdftxt{Agent} and
\rdftxt{Work}. The user then applies a subclass expansion on the \rdftxt{Agent} bar to build a bar chart over the agents. Additional 
subclass expansions are then applied to focus on the \rdftxt{Scientist} nodes (through class \rdftxt{Person}). Next, an out-property expansion is applied to get the
distribution of outgoing properties of scientists, and from there the user selects the \rdftxt{birthPlace} bar. An object expansion over this bar results in the bar chart over the birth places of scientists, and from there the user selects the \rdftxt{City} bar. 
\qed
\end{example}

	\def\sp{\,\,}
\def\ind{\quad\quad}
\def\estwj{C_{\mathsf{wj}}}
\def\estajn{C_{\mathsf{aj}}}
\def\estajd{C_{\mathsf{aj}}^{\mathsf{d}}}
\def\eqdef{\mathrel{{:}{=}}}
\def\pr{\mathrm{Pr}}
\def\fpaths{\boldsymbol{\Gamma}}
\def\ppaths{\boldsymbol{\Delta}}

\section{Query Engine and Algorithms} \label{sec:engine}

To feature interactive exploration, the underlying query engine of the
system should answer multiway join queries in less than a
second. However, in initial experiments with the Virtuoso system, such
queries would sometimes take minutes to complete. Algorithms that
implement worst-case-optimal joins have recently been shown to be
capable of orders-of-magnitude speedup compared to traditional join
approaches~\cite{DBLP:journals/jacm/NgoPRR18,DBLP:conf/sigmod/NguyenABKNRR14},
and hence, offer a promising alternative.  Still, in experiments with
Cached Trie Join~\cite{DBLP:conf/edbt/KalinskyEK17} -- a
state-of-the-art representative of these join algorithms -- queries that
require large join results on multi-domain knowledge graphs (e.g.,
DBpedia) may still take tens of seconds to run.

With the goal of reaching acceptable performance, we turn to \e{online
  aggregation}, relaxing the expectation of exact counts to instead
aim for a fast but approximate initial response whose error reduces
over time~\cite{Hellerstein:1997:OA:253260.253291}. Such a compromise
is well justified in the context of this work, since queries are used
for rendering bar charts that can suffer loss of precision with
limited impact on the user experience.  Along these lines, we
investigate use of Wander Join~\cite{DBLP:conf/sigmod/0001WYZ16},
which is designed for aggregate queries over the grouped results of
join queries; this algorithm has been empirically demonstrated to
offer much better convergence compared to traditional online
aggregation approaches in experiments over
TPC-H~\cite{DBLP:conf/sigmod/0001WYZ16}. However Wander Join has two
limitations in our specific use-case: \e{(1)} rejected paths slow
convergence of the estimations, \e{and (2)} it does not support (i.e.,
provide an unbiased estimator) for the count-\e{distinct} operator.

We thus propose a novel online-aggregation algorithm, \audit, which
addresses these limitations of Wander Join. First, in cases where a
high number of rejected paths are deemed likely to occur, \audit
defers to partial exact computations using Cached Trie Join. Second,
\audit incorporates a novel estimator for counts under the distinct
operator that we prove to be unbiased. 

This section now discusses the various algorithms we employ to improve
query performance in our interactive exploration setting, starting
with preliminaries on query translation, then discussing Cached
Trie Join and Wander Join, before detailing our \audit proposal.

\subsection{Query Translation and Structure} \label{sec:structure}
Section~\ref{sec:model} defined our exploration model. The five
operations of subclass, in-property, out-property, object and subject
expansions are translated to SPARQL queries via our query
engine. These SPARQL queries produce the information required to
generate the next bar chart by first executing a multiway join that
encodes the expansions thus far, then a grouping on the URIs of the
next chart, and finally a distinct count on the focus set of the next
chart. Due to the structure of exploration steps, cyclic queries
cannot occur.

The general form of these SPARQL queries is illustrated by the query
template in Figure~\ref{fig:general-query}. Here, each pattern of the
form $a_i~b_i~c_i$ refers to a \e{triple pattern}, where each term
$a_i$, $b_i$ and $c_i$ (where $1 \leq i \leq n$) is either a variable
(e.g., $\texttt{?s}$) or a constant (e.g., $\texttt{<Person>}$). A
variable may appear in at most two triple patterns. Finally, $\alpha$
denotes a variable that will be assigned the URIs of the next bar
chart (either some $b_i$, or some $c_i$ where $b_i =
\texttt{rdf:type}$), while $\beta$ returns the focus set of the next
bar chart (either some $a_i$ or $c_i$). As an example, the exploration
\e{birthplaces of persons} is translated to the SPARQL query shown in
Figure~\ref{fig:specific-query}.

\lstset{mathescape,columns=fullflexible,basicstyle=\ttfamily,frame=single}
\begin{figure}
\centering
\begin{minipage}[t]{0.9\columnwidth}
\begin{lstlisting}
SELECT $\alpha$ COUNT(DISTINCT $\beta$) WHERE { 
    $a_1\ b_1\ c_1\ .$  
    $\ldots$
    $a_n\ b_n\ c_n\ .$
} GROUP BY $\alpha$
\end{lstlisting}
\end{minipage}
\caption{The general form of an exploration query\label{fig:general-query}}
\end{figure}

\begin{figure}[b]
\centering
\begin{minipage}[t]{0.9\columnwidth}
\begin{lstlisting}
SELECT ?c COUNT(DISTINCT ?o) WHERE { 
    ?s <birthPlace> ?o. 
    ?s rdf:type ?sc. 
    ?sc rdf:type <Person>.
    ?o rdf:type ?c. 
} GROUP BY ?c
\end{lstlisting}
\end{minipage}

\caption{An instance of an exploration query\label{fig:specific-query}}
\end{figure}

\newtheorem*{remark}{Remark}

\begin{remark}
  In practice, triple patterns with the ``\texttt{rdf:type}'' property are joined with the reflexive/transitive closure of subclasses. For example, in the pattern ``\texttt{?sc rdf:type
    <Person>}'' of Figure~\ref{fig:specific-query}, \texttt{?sc} will also be mapped to instances of (possibly indirect) subclasses of \texttt{<Person>}. We materialize this subclass closure and view it as a raw relation; instances, on the other hand, are typed per the original data and joined with the subclass closure at runtime. For simplicity, we leave the subclass closure implicit in the presentation of the queries since, as previously mentioned, it is orthogonal to the model.\qed
\end{remark}

In the remainder of this section, we denote by $G_i$ the subset of $G$
that consists of all the triples of the knowledge graph $G$ that match
the triple pattern $(a_i,b_i,c_i)$, where a triple $(a,b,c)$
\e{matches} $(a_i,b_i,c_i)$ if the two agree on the constants (that
is, if $a_i$ is a constant then $a=a_i$, and so on).

\subsection{Aggregation via Cached Trie Join}\label{sec:ctj}
The exact evaluation we incorporate in our approach is based on the \e{Cached Trie Join}
algorithm (CTJ)~\cite{DBLP:conf/edbt/KalinskyEK17}. This algorithm
incorporates caching of intermediate join results on top of the
\e{LeapFrog Trie Join} algorithm
(LFTJ)~\cite{DBLP:conf/icdt/Veldhuizen14}---a backtracking join
algorithm that traverses over trie indexes. In our context, we
maintain six trie indexes over $G$, each corresponding to an ordering
of the three attributes ($s$, $p$ and $o$). The trie index has a root,
and under the root a layer with the values of the first attribute, and
then a layer with the values of the second attribute, and then the
third attribute. Each triple $(s,p,o)$ corresponds to a unique
root-to-leaf path of the trie. For example, if the order is $(p,o,s)$,
then the first layer corresponds to the predicates, the second to the
objects, and the third to the subjects; in this case, a path
$\textrm{root}\rightarrow b\rightarrow c\rightarrow a$ represents the
triple $(a,b,c)$ of $G$.  In our implementation, B-tree like indexes
are used, similar to the indexes commonly used in SPARQL (and other) query engines.

LFTJ assumes a predetermined order over the variables, say
$x_1,\dots,x_m$. We access the tuples of $G_i$ using a trie $T_i$ with
an order that is consistent with the predetermined order.  For
example, if the triple $a_i$ $b_i$ $c_i$ is \texttt{?q <birthPlace>
  ?r} and \texttt{?r} precedes \texttt{?q} in the predefined order,
then $T_i$ will be the trie for $(o,s,p)$, $(o,p,s)$, or $(p,o,s)$.
LFTJ uses a backtracking algorithm that walks over the $T_i$ and looks
for assignments for $x_1,\dots,x_m$. It starts by finding the first
matching value $v_1$ for $x_1$. Then, the tries $T_i$ that contain
$x_1$ restrict their search to the subtree under $x_1=v_1$. Next, it
looks for the first match $v_2$ of the next variable $x_2$, and all
relevant tries restrict to the subtree under $x_2=v_2$.  The algorithm
continues to remaining variables, until a match is found for all
variables, or it cannot find a matching value for the next
variable. Once a match is found, or the algorithm gets stuck, it
backtracks to the next value of the previous scanned $x_i$. Grouping
and counting are applied in the straightforward manner.

While LFTJ guarantees worst-case optimality, it frequently re-computes
the same intermediate joins, since it does not materialize any of the
intermediate results~\cite{DBLP:conf/edbt/KalinskyEK17}. To
effectively reuse the partial answers, CTJ augments LTFJ with a cache
structure guided by a tree decomposition of the query, guaranteeing
the correctness of the algorithm. In the use case of this paper, the
tree decomposition is easily determined by the path formed by the
query. CTJ uses different caching schemes to cache partial count
results that are later reused. Empirically, CTJ can achieve orders of
magnitude speedup over LFTJ and other known join algorithms for
graph queries on relations with a low
dimension~\cite{DBLP:conf/edbt/KalinskyEK17}.


\subsection{Wander Join}
Wander Join is an online aggregation algorithm that is designed for
aggregates over
joins~\cite{DBLP:journals/jacm/NgoPRR18,DBLP:conf/sigmod/NguyenABKNRR14}.
Since Wander Join does not support the distinct operator, we ignore
the operator in this section. For presentation sake, we begin by also
ignoring the grouping operator, and assume that we only need to count
the number of matches for the variables. We discuss grouping later in
the section.

Given the query of Figure~\ref{fig:general-query}, Wander Join samples
query answers via independent random walks over the $G_i$, in contrast
to the full pre-order traversal of CTJ.  It estimates the count by
adopting the Horvitz-Thompson
estimator~\cite{horvitz1952generalization}, where each random walk
$\gamma$ produces an estimator $\estwj(\gamma)$ that we describe next,
and the final estimator is simply the average of the $\estwj(\gamma)$
over all random walks $\gamma$. The walk $\gamma$ is constructed as
follows. We first select a random tuple $t_1$ uniformly from
$G_1$. Next, select a tuple $t_2$ that is consistent with $t_1$; that
is, $t_1$ and $t_2$ agree on the common attributes; the choice is
again uniform among all consistent $t_i$.  We continue in this way,
where in the $i$\textsuperscript{th} step we select a random tuple $t_i$ from $G_i$ such
that $t_i$ is consistent with $t_{i-1}$. If, at any point, no matching
$t_i$ exists, then the random walk $\gamma$ terminates and
$\estwj(\gamma)=0$.  Otherwise, denote by $d_i$ the number of possible
ways of selecting $t_i$, for $i=1,\dots,n$. Note that the probability
of $\gamma$ is $\prod_{i=1}^n1/d_i$.  The estimator $\estwj(\gamma)$
is defined by
\[\estwj(\gamma)\eqdef\prod_{i=1}^n d_i=\frac{1}{\pr(\gamma)}\,.\]
The estimator $\estwj(\gamma)$ is unbiased; consequently, the final
estimator (i.e., the average) is also unbiased. To see why $\estwj$ is
unbiased, denote by $\fpaths$ the set of all successful (full) paths
from $G_1$ to $G_n$. Then the sought count is $|\fpaths|$. Indeed,
\[
\mathbb{E}[\estwj]=\sum_{\gamma\in \fpaths}\pr(\gamma)\cdot\estwj(\gamma)=
\sum_{\gamma\in\fpaths}\frac{\pr(\gamma)}{\pr(\gamma)}=|\fpaths|\,.
\]

{\begin{figure}
\begin{tikzpicture}[darkstyle/.style={circle,draw,fill=black,minimum size=0.001}]
  \def\tbls{4} 
  \def\insts{5} 
  \foreach \x in {1, ..., \tbls} {
    \foreach \y in {1,..., \insts} {
            \filldraw (2*\x-2,-\y) circle (2pt) 
            node[label=above:$t^{\y}_{\x}$] (\x\y) {};
    }
    \node[] at (\x*2-2, -\insts-0.8) {$G_\x$};
  }
    
    \def\edg#1#2{\pgfmathtruncatemacro\tbl{#1+1}; \draw (#1#2) -- (\tbl#2);}
    \def\edgn#1#2{\pgfmathtruncatemacro\tbl{#1+1}; \pgfmathtruncatemacro\ind{#2+1} \draw (#1#2) -- (\tbl\ind);}
    \def\edgp#1#2{\pgfmathtruncatemacro\tbl{#1+1}; \pgfmathtruncatemacro\ind{#2-1} \draw (#1#2) -- (\tbl\ind);}

    \def\edgo#1#2#3{\pgfmathtruncatemacro\tbl{#1+1}; \pgfmathtruncatemacro\ind{#2+#3} \draw (#1#2) -- (\tbl\ind);}
    
    \def\edgr#1#2#3{\pgfmathtruncatemacro\tbl{#1+1}; \foreach \x in #3 { \draw (#1#2) -- (\tbl\x);}}

    \edgr{1}{2}{{1,2,3,4}}    
    \edgr{2}{1}{{1,2,3}} 
    \edgr{2}{2}{{1,2,3,4}}    
    \edgr{2}{3}{{1,2,3,4}} 
    \edgr{3}{2}{{2,3}}

    \edgr{1}{4}{{4,5}}    
    \edgr{2}{4}{{3,4,5}} 
    \edgr{2}{5}{{3,4,5}}    
    \edgr{3}{5}{{4,5}}



    
\end{tikzpicture}
\caption{A join graph\label{fig:general-query-graph}}
\end{figure}}

\begin{example}
  We demonstrate Wander Join using the example join graph depicted in
  Figure~\ref{fig:general-query-graph}. There, each column in the
  figure is a graph $G_i$, and each node $t_{i}^{j}$ is a tuple of
  $G_i$. An edge exists between two tuples if they agree on their join
  attributes. The random walk is from left to right. Choosing the
  random path $\gamma_1=(t_{1}^{2},t_{2}^{2},t_{3}^{2},t_{4}^{2})$
  yields the estimate $\estwj(\gamma_1)=5 \cdot 4 \cdot 4 \cdot
  2=160$. For $\gamma_2=(t_{1}^{4},t_{2}^{5},t_{3}^{5},t_{4}^{5})$
  we will get $\estwj(\gamma_2)=5 \cdot 2 \cdot 3 \cdot
  2=60$. Finally, partial paths, such as
  $(t_{1}^{2},t_{2}^{2},t_{3}^{3})$, will yield the estimate zero. The
  final estimator is the average over all estimates.  \qed
\end{example}

Wander Join adapts to \e{grouping} in a manner similar to that of
Ripple Join~\cite{Haas:1999:RJO:304182.304208}: maintaining a separate
estimator for each group, and using the random $\gamma$ to update only
the separator of the group to which $\gamma$ belongs.

\subsection{\audit}\label{sec:aj}
For presentation sake, we first describe \audit while ignoring the
distinct operator. Again, we also ignore grouping, since \audit is
adapted to grouping similarly to Wander Join and Ripple Join. Hence,
our goal is again to estimate $|\fpaths|$, where
$\fpaths$ is the set of all full random walks $\gamma$ from $G_1$ to
$G_n$. 

Zhao et al.~\cite{Zhao:2018:RSO:3183713.3183739} combine sampling with exact count computation in order to improve the uniformity of sampling join results. We begin by adopting this idea for online aggregation. 

The basic idea is as follows. For a prefix $\delta=(t_1,\dots,t_\ell)$
of a random walk,  denote by $\fpaths_\delta$ the
set of full paths $\gamma$ with the prefix $\delta$. At each step of
the random walk, we make a rough estimation of the complexity of
computing the precise $|A_\delta|$; we describe this estimation in
Section~\ref{sec:tipping}.  If the estimate is low, we actually
compute $|\fpaths_\delta|$ using CTJ (as described in
Section~\ref{sec:ctj}), and then our estimate is
\[\estajn(\delta)\eqdef|\fpaths_\delta|\times\prod_{i=1}^\ell d_i=\frac{|\fpaths_\delta|}{\pr(\delta)}\,.\]
Otherwise, we proceed exactly as Wander Join. In particular, if we
cannot continue in the random walk, or reach a full path, then we use
$\estajn(\delta)=\estwj(\delta)$.

\begin{example}
  We illustrate \audit (without distinct) by continuing our example
  over Figure~\ref{fig:general-query-graph}. Suppose that after the
  random walk $\delta=(t_{1}^{2},t_{2}^{2})$, we choose to run an
  exact evaluation. Then $|\fpaths_\delta|=2$, since there are two
  full paths (ending at $t_{4}^{2}$ and $t_{4}^{3}$) that begin with
  $\delta$. Then the estimation is $\estajn(\delta) =
  |\fpaths_\delta|/\pr(\delta) = 2 \cdot (5 \cdot 4)=40$.  \qed
\end{example}

In the following proposition, we show that $\estajn$ is unbiased by
straightforwardly adapting the argument for Wander Join.

\begin{proposition}\label{prop:non-n-unbiased}
$\estajn$ is an unbiased estimator of 
$|\fpaths|$.
\end{proposition}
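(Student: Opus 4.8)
The plan is to mirror the unbiasedness argument given above for Wander Join, but to sum over the \emph{terminal prefixes} at which a single run of \audit stops rather than over full paths. A run stops at a prefix $\delta=(t_1,\dots,t_\ell)$ in exactly one of three ways: (i) it reaches a full path ($\ell=n$) having never triggered an exact computation; (ii) it triggers an exact computation at $\delta$; or (iii) it gets stuck, with no consistent $t_{\ell+1}$. The structural fact I will lean on is that the decision to defer to CTJ is a deterministic function of the current prefix (it depends only on the selectivity estimate of Section~\ref{sec:tipping}), so whether $\delta$ is a terminal prefix, and of which type, is determined by $\delta$ alone.

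First I would record the probability that a run ends at a given terminal prefix $\delta$. Because the audit decision is deterministic, a prefix $\delta$ can be a reachable terminal prefix only if none of its proper prefixes already triggered an exact computation; for such $\delta$ the event ``the run ends at $\delta$'' coincides with the event ``the walk follows $\delta$'', which has probability $\prod_{i=1}^\ell 1/d_i=\pr(\delta)$. This lets me write
\[
\mathbb{E}[\estajn]=\sum_{\delta}\pr(\delta)\cdot\estajn(\delta),
\]
where the sum ranges over all terminal prefixes $\delta$.

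Next I would split the sum by stopping type and cancel $\pr(\delta)$ in each. Let $A$ be the set of prefixes at which a run triggers an exact computation, and let $F\subseteq\fpaths$ be the set of full paths reached without any exact computation. For $\delta\in A$ we have $\estajn(\delta)=|\fpaths_\delta|/\pr(\delta)$, contributing $|\fpaths_\delta|$; for $\gamma\in F$ we have $\estajn(\gamma)=1/\pr(\gamma)$, contributing $1$; stuck prefixes contribute $0$ since $\estajn=0$ there. Hence
\[
\mathbb{E}[\estajn]=\sum_{\delta\in A}|\fpaths_\delta|+|F|.
\]

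The crux, and the step I expect to require the most care, is showing that this equals $|\fpaths|$, i.e.\ that $F$ together with the families $\fpaths_\delta$ for $\delta\in A$ partitions $\fpaths$. I would derive this from the deterministic audit rule. The reachable prefixes in $A$ are \emph{prefix-free}: once a run audits at $\delta$ it halts, so no proper extension of $\delta$ is ever a terminal prefix; this makes the $\fpaths_\delta$ pairwise disjoint and disjoint from $F$. Conversely, walking along any fixed full path $\gamma\in\fpaths$, the audit rule is evaluated at each prefix, and either it fires at a unique first prefix $\delta^*\subsetneq\gamma$, placing $\gamma\in\fpaths_{\delta^*}$ with $\delta^*\in A$, or it never fires, placing $\gamma\in F$. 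Thus every full path is covered exactly once, so $\fpaths=F\cup\bigcup_{\delta\in A}\fpaths_\delta$ with the union disjoint, the two displayed quantities agree, and $\mathbb{E}[\estajn]=|\fpaths|$. Finally I would remark that if the audit decision were randomized rather than deterministic, the conclusion still follows by conditioning on the decisions and applying the tower rule, since the argument holds verbatim for every fixed deterministic audit policy.
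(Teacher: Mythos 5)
Your proof is correct and follows essentially the same route as the paper's: both sum $\pr(\delta)\cdot\estajn(\delta)$ over the set of terminal prefixes, observe that the estimator takes the uniform form $|\fpaths_\delta|/\pr(\delta)$ in all three stopping cases (full path, exact computation, stuck), and cancel $\pr(\delta)$ to reduce the claim to $\sum_{\delta}|\fpaths_\delta|=|\fpaths|$. The only difference is that you explicitly justify this last partition identity (via prefix-freeness of the terminal prefixes under the deterministic tipping rule), a step the paper's proof asserts without comment, so your argument is a slightly more rigorous rendering of the same idea rather than a different approach.
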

\begin{proof}
  Let $\ppaths$ be the set of all paths $\delta$ where \audit decides to
  terminate the path and produce an estimate. This can be because $\delta$ is a
  full path, because it cannot proceed, or because we decide to compute the
  exact $|\fpaths_\delta|$. The reader can verify that, no matter which of
  the three is the case, \audit produces the same estimator, namely
  $|\fpaths_\delta|/\pr(\delta)$. In particular, $|\fpaths_\delta|=1$ in
  the first case and $|\fpaths_\delta|=0$ in the second. We have the
  following.
\begin{align*}
  \mathbb{E}[\estajn]= \sum_{\delta\in
    \ppaths}{\pr(\delta)}\cdot\frac{|\fpaths_\delta|}{\pr(\delta)}= \sum_{\delta\in
    \ppaths}|\fpaths_\delta|=|\fpaths|
\end{align*}
Therefore, $\estajn$ is unbiased, as claimed.
\end{proof}

Note that Audit Join automatically leverages the caching of CTJ,
potentially avoiding re-computation when building the same prefix
$\delta$ in later random walks.

We now extend our estimator to support the distinct operator. Recall
the query of Figure~\ref{fig:general-query}. Our goal is to count the
distinct values taken by $\beta$. For a full path $\gamma$, we denote
by $\beta(\gamma)$ the value to which $\gamma$ assigns $\beta$. Let
$V=\set{\beta(\gamma)\mid \gamma\in \fpaths}$. Our goal is to estimate
$|V|$. For $b\in V$, we denote by $\pr(b)$ the probability that the
random walk reaches a full path $\gamma$ with $\beta(\gamma)=b$; that is,
$\pr(b)$ is the sum of the probabilities of all $\gamma\in \fpaths$
that assign $b$ to $\beta$. Similarly, we denote by $\pr(\delta,b)$ the
probability that the random walk starts with $\delta$ and reaches a
full path $\gamma$ with $\beta(\gamma)=b$; that is, $\pr(\delta,b)$ is the
sum of the probabilities of all $\gamma\in \fpaths$ such that $\alpha$
is a prefix of $\gamma$ and $\gamma$ assigns $b$ to $\beta$. We then combine these probabilities into the following estimator for distinct.


\begin{equation}\label{eq:estaud}
\estajd(\delta)\eqdef
\sum_{b\in V}
\frac%
{\pr(\delta,b)}%
{\pr(\delta)\cdot\pr(b)}
\end{equation}

\begin{example}
  To demonstrate \audit with count distinct, we again use our running
  example over Figure~\ref{fig:general-query-graph}.  For this
  example, suppose that $\beta$ occurs in $G_3$, and moreover, that
  each tuple $t_3^i$ holds a unique value for $\beta$ (while many join
  tuples may include $t_3$).  Suppose that the random walk produces
  $\delta=(t_{1}^{2},t_{2}^{2})$, and that \audit decides to run an
  exact evaluation at this point.  There are two full paths that
  extend $\delta$, both through $(t_{2}^{3})$. We denote by $b$ the
  value of $\beta$ for $(t_{2}^{3})$. From the previous example we get
  that $\pr(\delta) = \frac{1}{20}$. There are three paths leading to
  $t_{2}^{3}$, and by summing their probabilities we get
  $\pr(b)=\frac{1}{5 \cdot 4 \cdot 3}+\frac{2}{5 \cdot 4 \cdot
    4}=\frac{1}{24}$. The last probability of our estimator is
  \[\pr(b,\delta)=\pr(b\mid\delta) \cdot \pr(\delta)=\frac{1}{4} \cdot
  \frac{1}{20}=\frac{1}{80}\,.\] Hence, our estimator yields the
  following.
\[\estajd(\delta) = 
\frac%
{\pr(\delta,b)}%
{\pr(b)\cdot\pr(\delta)} = 
\frac%
{20*24}%
{80}
\]
Hence, the estimate $\estajd(\delta)$ is $6$. \qed
\end{example}

In our implementation, the probability $\pr(b)$ is computed online, after sampling the partial random path $\delta$, by using CTJ to materialize all paths leading to the sampled $b=\beta(\delta)$, summing up their
probabilities, and caching the results. Clearly, this can be an expensive join query, and our
cost estimation (described in Section~\ref{sec:tipping}) guards us
from costly cases.  Nevertheless, as we show in Section~\ref{sec:experiments},
it turns out that in our use case, this computation is very often
tractable after setting $\beta=b$, hence the considerable benefit of \audit.  Next, we show
that the estimator is unbiased.

\begin{proposition}\label{prop:non-d-unbiased}
$\estajd$ is an unbiased estimator of 
$|V|$.
\end{proposition}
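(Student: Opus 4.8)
The plan is to compute the expectation $\mathbb{E}[\estajd]$ directly, mirroring the structure of the proof of Proposition~\ref{prop:non-n-unbiased}, but now tracking the contribution of each distinct value $b\in V$ separately. As there, let $\ppaths$ be the set of prefixes $\delta$ at which \audit halts (because $\delta$ is full, because it cannot be extended, or because we elect to run the exact computation). The random walk halts at exactly one such $\delta$, reached with probability $\pr(\delta)$, so $\{\pr(\delta)\}_{\delta\in\ppaths}$ is a genuine distribution and $\mathbb{E}[\estajd]=\sum_{\delta\in\ppaths}\pr(\delta)\,\estajd(\delta)$. Substituting the definition~\eqref{eq:estaud} and cancelling the weight $\pr(\delta)$ against the $\pr(\delta)$ in the denominator gives
\[
\mathbb{E}[\estajd]=\sum_{\delta\in\ppaths}\sum_{b\in V}\frac{\pr(\delta,b)}{\pr(b)}=\sum_{b\in V}\frac{1}{\pr(b)}\sum_{\delta\in\ppaths}\pr(\delta,b),
\]
where the last step swaps the two finite sums.

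The crux is then the identity $\sum_{\delta\in\ppaths}\pr(\delta,b)=\pr(b)$ for every $b\in V$. I would obtain this from the same structural fact underlying Proposition~\ref{prop:non-n-unbiased}: the halting prefixes in $\ppaths$ are prefix-free, and every full path $\gamma\in\fpaths$ has exactly one prefix in $\ppaths$ (namely the first point at which the deterministic halting rule fires), so the sets $\fpaths_\delta$ for $\delta\in\ppaths$ partition $\fpaths$. Recalling that $\pr(\delta,b)=\sum_{\gamma\in\fpaths_\delta,\,\beta(\gamma)=b}\pr(\gamma)$ and $\pr(b)=\sum_{\gamma\in\fpaths,\,\beta(\gamma)=b}\pr(\gamma)$, summing $\pr(\delta,b)$ over $\delta\in\ppaths$ simply reassembles $\pr(b)$ from the blocks of this partition; it is the law of total probability restricted to the full paths assigning $b$ to $\beta$. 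Combining the two displays yields $\mathbb{E}[\estajd]=\sum_{b\in V}\pr(b)/\pr(b)=|V|$, as required (and $\pr(b)>0$ for every $b\in V$, so each term is well defined).

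A clean way to package the argument is per value: for each $b\in V$ set $X_b(\delta)=\pr(\delta,b)/(\pr(\delta)\pr(b))=\pr(b\mid\delta)/\pr(b)$, so that $\estajd=\sum_{b\in V}X_b$. The computation above shows $\mathbb{E}[X_b]=1$ for each $b$, i.e.\ every distinct value contributes an unbiased estimate of $1$, and linearity of expectation finishes the proof.

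The main obstacle I anticipate is not the algebra but justifying the partition step rigorously, and in particular checking that the single identity $\sum_{\delta\in\ppaths}\pr(\delta,b)=\pr(b)$ holds \emph{uniformly} across the three kinds of halting prefix. One must argue that the halting rule is deterministic on prefixes (so $\ppaths$ is well defined and prefix-free) and verify the boundary cases against $\pr(\delta,b)=\sum_{\gamma\in\fpaths_\delta,\,\beta(\gamma)=b}\pr(\gamma)$: for a stuck $\delta$ one has $\pr(\delta,b)=0$ for all $b$, while for a full $\delta=\gamma$ with $\beta(\gamma)=b_0$ one has $\pr(\delta,b)=\pr(\delta)\,[b=b_0]$, so such a $\delta$ contributes $\pr(\gamma)$ to the block of $b_0$ and nothing elsewhere. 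Confirming that these cases are consistent with the general definition is the step that deserves the most care.
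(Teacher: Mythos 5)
Your proof is correct and takes essentially the same route as the paper's own proof: both expand $\mathbb{E}[\estajd]$ over the halting prefixes in $\ppaths$, cancel $\pr(\delta)$, exchange the two sums, and rely on the fact that the halting prefixes partition the set of full paths. The only difference is cosmetic: the paper states the key identity in conditional form, $\sum_{\delta\in\ppaths}\pr(\delta\mid b)=1$, whereas you state the equivalent $\sum_{\delta\in\ppaths}\pr(\delta,b)=\pr(b)$ and spell out the partition argument (including the stuck-path and full-path boundary cases) that the paper treats implicitly via its ``uniform treatment'' of Equation~\eqref{eq:estaud}.
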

\begin{proof}
  Following a similar reasoning as in the proof of
  Proposition~\ref{prop:non-n-unbiased}, we can treat all cases of the
  estimator in a uniform way, that is, according to
  Equation~\eqref{eq:estaud}.  In the following analysis, we identify
  value $b\in V$ with the event that the random walk is complete and,
  moreover, assigns $b$ to $\beta$.  Hence, we have the following.
\begin{align*}
\mathbb{E}[\estajd]=&%
\sum_{\delta\in \ppaths}%
\pr(\delta)\cdot
\sum_{b\in V}
\frac%
{\pr(\delta,b)}%
{\pr(\delta)\cdot\pr(b)}\\
=&
\sum_{\delta\in \ppaths}%
\pr(\delta)\cdot
\frac{1}{\pr(\delta)}
\times
\sum_{b\in V}
{\pr(\delta\mid b)}\\
=&%
\sum_{b\in V}%
\sum_{\delta\in \ppaths}
{\pr(\delta\mid b)}=
\sum_{b\in V}1=%
|V|
\end{align*}
Therefore, $\estajd$ is unbiased, as claimed.
\end{proof}

{
\begin{figure}
\def\assn{\mathrel{{:}{=}}}
\hrule
\vskip0.3em
$\mathsf{AuditJoin}(G_1,\dots,G_n,\alpha,\beta)$
\vskip0.3em
\hrule
\begin{algorithmic}[1]
\State $N\assn 0$
\State $A\assn$ the set of possible assignments for $\alpha$
\State $B\assn$ the set of possible assignments for $\beta$
\Repeat
\State $F_1\assn G_1$
\State $\delta\assn\epsilon$
\For{$i=1,\dots,n$}
\State $N\assn N+1$
\State select $t\in F_i$ randomly and uniformly
\State $\delta\assn (\delta,t)$
\If{$i=n$ or tipping point is reached}
\ForAll{$a\in A$}
\State $C_a \assn C_a+\sum_{b\in B}
\frac%
{\pr(a,b,\delta)}%
{\pr(a,b)\cdot\pr(\delta)}$
\EndFor
\State \textbf{continue} \Comment{Go to line~5}
\EndIf
\State $F_{i+1}\assn G_{i+1}\ltimes t$
\If{$F_{i+1}=\emptyset$}
\State \textbf{continue} \Comment{Go to line~5}
\EndIf
\EndFor
\Until{time limit is reached}
\ForAll{$a\in A$}
\State estimate count-distinct for $a$ as $C_a/N$
\EndFor
\end{algorithmic}
\hrule
\caption{\label{fig:aj}Audit Join pseudo code}
\end{figure}}

\vspace{-1em}

\subsubsection{Tipping Point}\label{sec:tipping}
To decide when to use partial exact computations, we use a rough
estimate of the join complexity. We do so using a simple
technique for join-size estimation as used by
PostgreSQL.\footnote{\url{https://www.postgresql.org/docs/current/static/planner-stats-details.html}}
In the case of two triple patterns $(a_1,b_1,c_1)$ and $(a_2,b_2,c_2)$ joining on say $c_1 = c_2$, the size is estimated as the product between the number of triples matched by $(a_1,b_1,c_1)$ and $(a_2,b_2,c_2)$, divided by the maximum number of distinct terms  of $c_1$ or $c_2$. For more than two patterns, we compose the estimates in the straightforward manner. If the estimate is lower than
a predefined threshold, \audit switches to exact computation.
In this case we say that the \e{tipping point} is reached.
While simple, this join estimation allows \audit to consistently achieve considerable improvements, as shown in the experimental section.
Investigating more sophisticated estimates
(e.g.,~\cite{DBLP:journals/pvldb/VengerovMZC15}) is left as an important direction for future research.

\subsubsection{Summary}
We summarize  \audit in Figure~\ref{fig:aj}. The code is similar
to  Section~\ref{sec:aj}, except that it
incorporates grouping. The sets $A$ and $B$ are projections over 
the attributes $\alpha$ and $\beta$, respectively (Figure~\ref{fig:general-query}).
The estimates are accumulated in $C_a$ for
every group $a$. The probability ${\pr(a,b,\delta)}$ corresponds to
the event that the random walk starts with $\delta$ and includes the
group $a$ and the counted value $b$.  Hence, it is the sum of the
probabilities of all such random walks. Similarly, ${\pr(a,b)}$ is the
probability that the random walk includes the group $a$ and the
counted value $b$. Note that in line~17, the left semi-join
$G_{i+1}\ltimes t$ consists of all the tuples of
$G_{i+1}$ that can be matched with $t$.

	\section{Experimental Study} \label{sec:experiments}

Our experimental study compares the performance of our four query engine strategies: Virtuoso, Cached Trie Join (CTJ), Wander Join (WJ) and \audit (AJ). More specifically, in the context of answering a variety of queries in our exploration system, we address the following core questions:

\begin{itemize}
    \item[Q1:] How does the performance of the two exact computation strategies -- Virtuoso and CTJ -- compare in the distinct case required by our system?
    \item[Q2:] How does the error rate of the two online aggregation strategies -- WJ and AJ -- compare over time in both the distinct and non-distinct case?
\end{itemize}

\begin{figure*}[ht]
        \centering
        \begin{subfigure}[t]{0.3\textwidth}
            \includegraphics[width=\textwidth, trim={0 0 0 1.32cm}, clip=true]{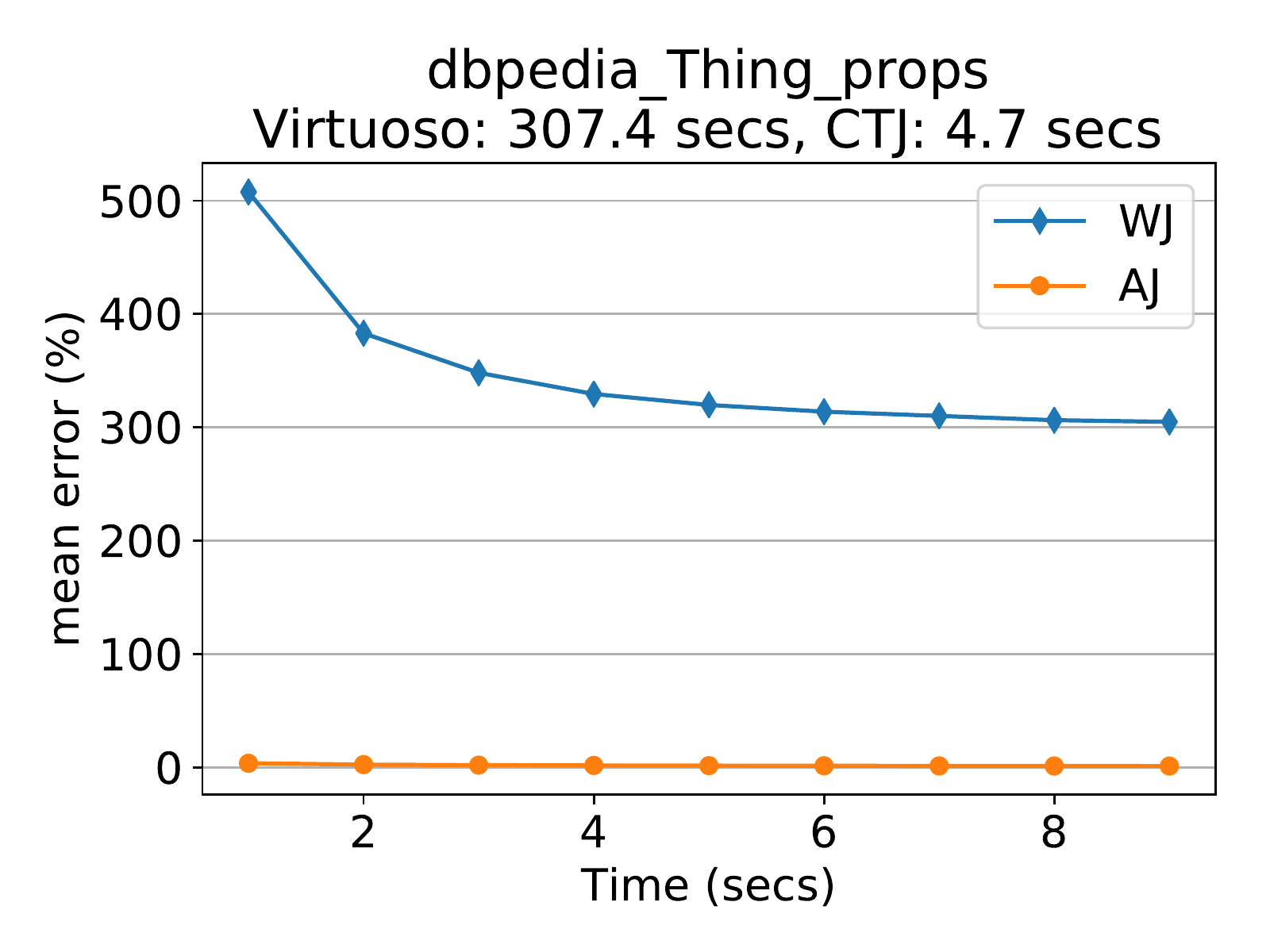}
            \caption{Out-property expansion of\\ class \texttt{Thing} \label{fig:systemd1}}
        \end{subfigure}
        \begin{subfigure}[t]{0.3\textwidth}
            \includegraphics[width=\textwidth, trim={0 0 0 1.32cm},clip=true]{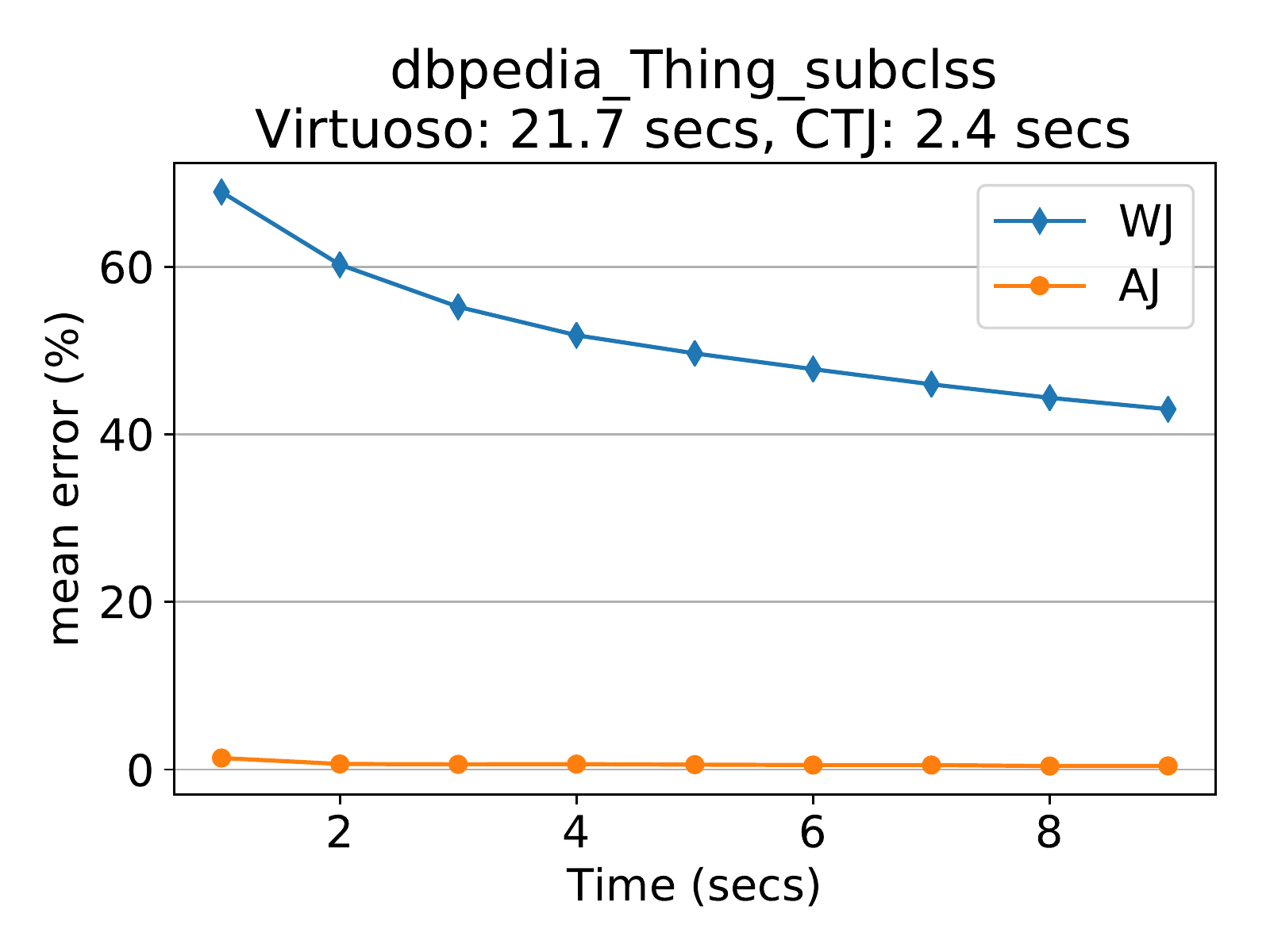}
            \caption{Subclass expansion of\\ class \texttt{Thing} \label{fig:systemd2}}
        \end{subfigure}
        \begin{subfigure}[t]{0.3\textwidth}
            \includegraphics[width=\textwidth, trim={0 0 0 1.32cm},clip=true]{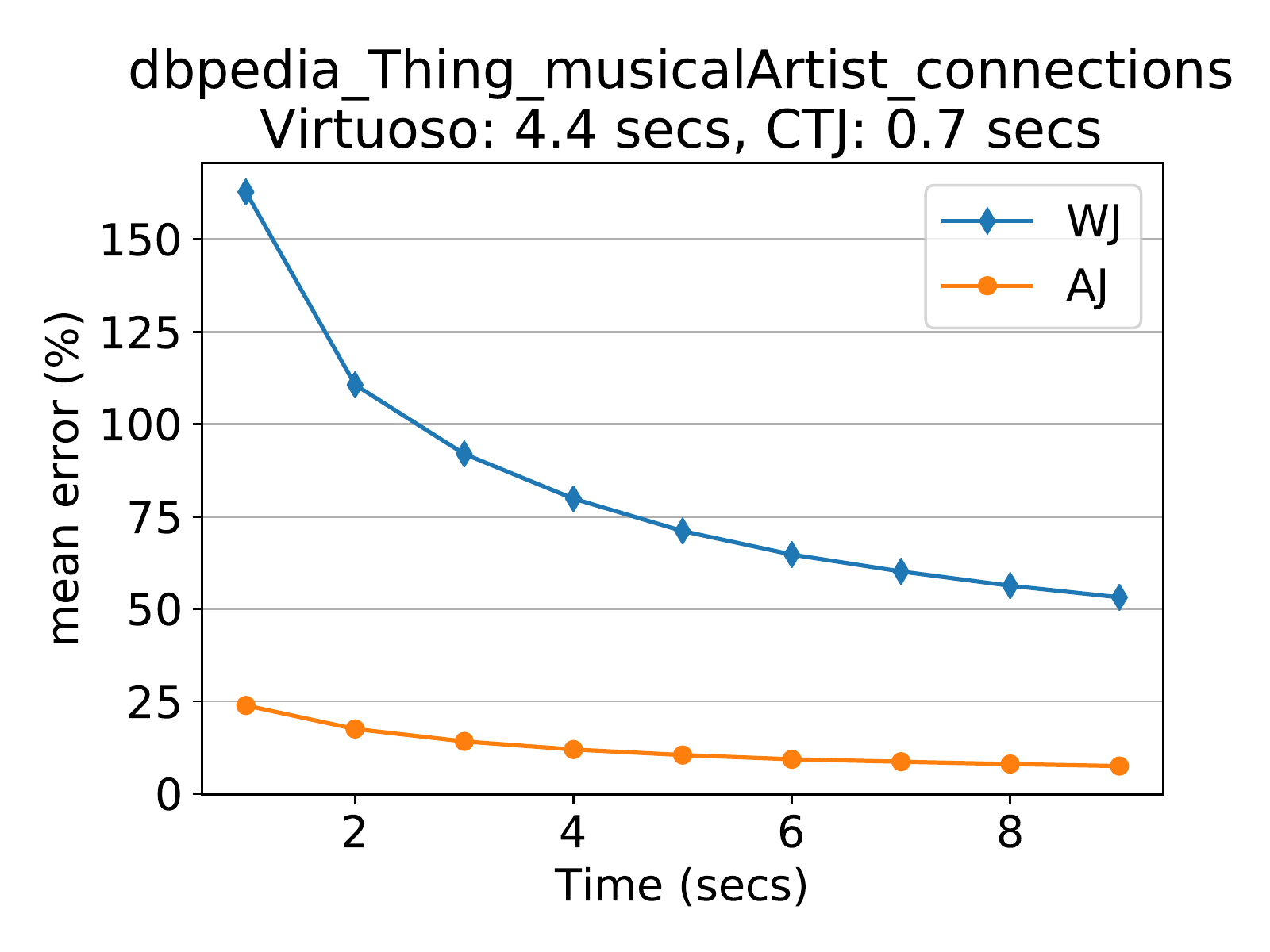}
            \caption{Object expansion of\\ property \texttt{musicalArtist} \\(subjects of type \texttt{Thing}) \label{fig:systemd3}}
        \end{subfigure}
        
        \par\bigskip
         \vspace*{-0.4ex}
        
        \begin{subfigure}[t]{0.3\textwidth}
            \includegraphics[width=\textwidth, trim={0 0 0 1.32cm},clip=true]{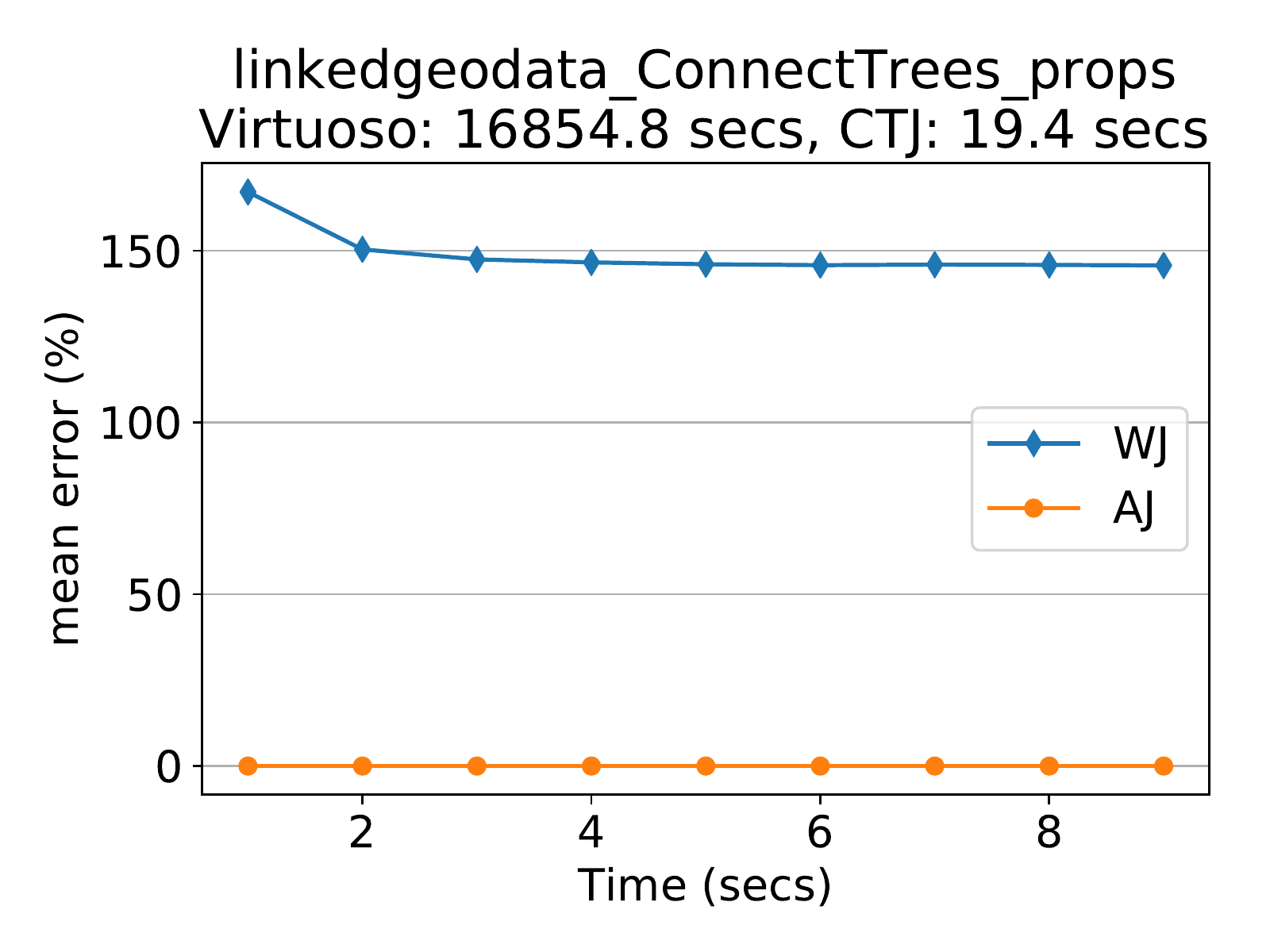}
            \vskip-1em
            \caption{Out-property expansion of\\ class \texttt{Thing} \label{fig:systeml1}}
        \end{subfigure}
        \begin{subfigure}[t]{0.3\textwidth}
            \includegraphics[width=\textwidth, trim={0 0 0 1.32cm},clip=true]{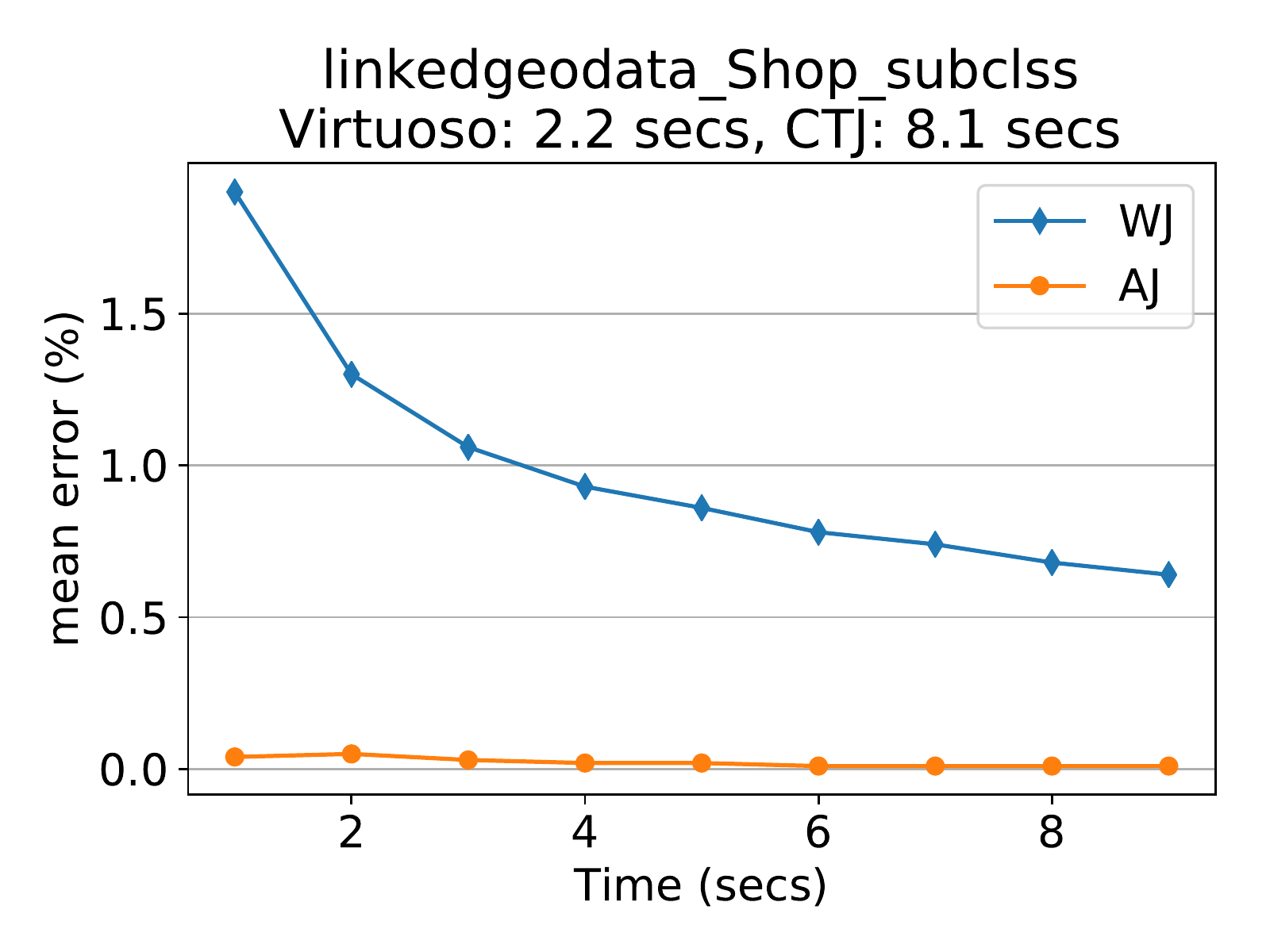}
            \vskip-1em
            \caption{Subclass expansion of\\ class \texttt{Shop} \label{fig:systeml2}}
        \end{subfigure}
        \begin{subfigure}[t]{0.3\textwidth}
            \includegraphics[width=\textwidth, trim={0 0 0 1.32cm},clip=true]{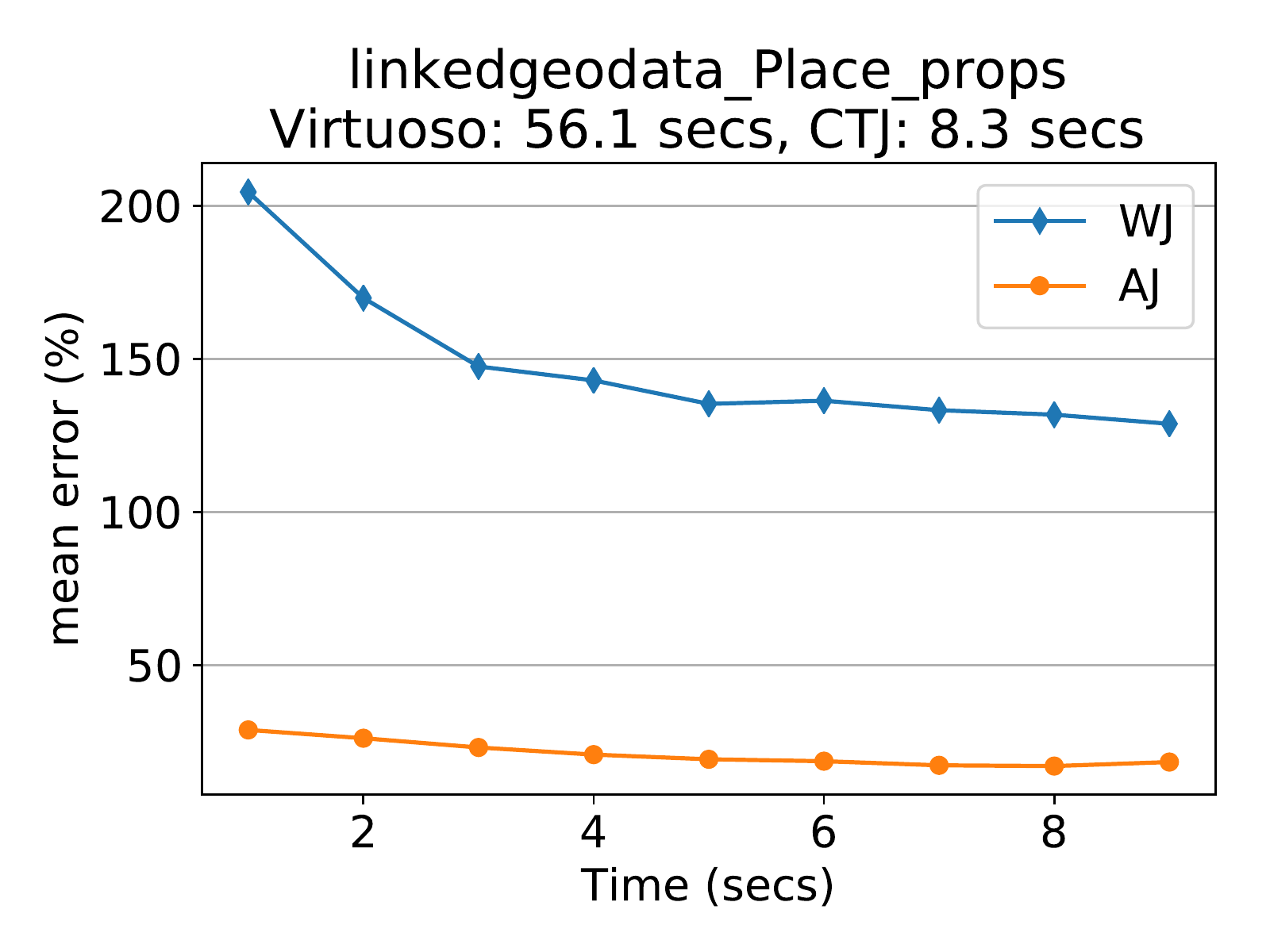}
            \vskip-1em
            \caption{Out-property expansion of\\ class \texttt{Place} \label{fig:systeml3}}
        \end{subfigure}
        \vspace*{-1ex}
        \caption{The mean error per second (WJ and AJ) and exact runtimes (Virtuoso and CTJ) for a selection of six exploration queries: three from DBpedia (top) and three from LinkedGeoData (bottom) \label{fig:systems}}
\end{figure*}

\subsection{Implementation}

We take Virtuoso as an off-the-shelf query engine and implement the other three strategies in C++. We implement WJ and CTJ as described in their corresponding papers and then implement AJ on top of these algorithms. We now discuss the configuration and implementation of each strategy.

We use Virtuoso v.~07.20.3217~\cite{Virtuoso} and configure it to use all available memory for caching indexes. Virtuoso runs the subclass closure as part of the expansion queries using property paths on the original graph; the query planner shows that this closure is executed first and takes only a few milliseconds for both knowledge bases. 

Our CTJ implementation uses LFTJ~\cite{DBLP:conf/icdt/Veldhuizen14} as the trie join algorithm. Since WJ does not support transitivity, the subclass closure is computed offline and materialized in the graph (instances are indexed with only their explicit types per the original data). The trie indexes are implemented using sorted arrays (\e{std::vector}) such that each search is done in $O(log(N))$. We implement CTJ with four of the six possible index orders: $(s,p,o)$, $(o,p,s)$, $(p,s,o)$, and $(p,o,s)$; these orders are sufficient to support our exploration queries. The CTJ cache structure uses an array of hashtables (\e{std::unordered\_map}[]).

In WJ, the graph is saved in an unsorted array (\e{std::vector}). Analogously to CTJ, the subclass closure is performed offline and materialized. The algorithm uses hashtable indexes (\e{std::unordered\_map}) over the array that enable sampling in $O(1)$ time.  In the case of distinct -- as there is no formal support for this operator in WJ -- we augment it with the technique proposed by Haas et al. in Ripple Join~\cite{Haas:1999:RJO:304182.304208, Hellerstein:1997:OA:253260.253291} for performing online aggregation in the distinct case: this technique stores the set of samples seen thus far and rejects new samples that have already been seen.

AJ is implemented on top of WJ and CTJ; it likewise assumes that the subclass closure has been materialized in the graph. Our implementation uses a hybrid hashtable/trie data-structure where the hashtable indexes point to a sorted array, allowing $O(1)$-time sampling for WJ and $O(log(n))$-time search for CTJ. Analogously to CTJ, AJ maintains four index orders and the same caching structure. Unlike WJ, AJ uses its own unbiased estimator for distinct (per Section~\ref{sec:engine}).

\subsection{Methodology}

\paragraph{Data}
Our data include two large-scale knowledge graphs: DBpedia~\cite{Bizer:2009:DCP:1640541.1640848} and LinkedGeoData~\cite{10.1007/978-3-642-04930-9_46}; details of these two graphs are presented in Table~\ref{tbl:datasets}. DBpedia contains multi-domain data extracted from Wikimedia projects such as Wikipedia; we take the English version of DBpedia~v.3.6, which contains 400 million RDF triples, 370 thousand classes and 62 thousand properties. LinkedGeoData specializes in spatial data extracted from OpenStreetMap; we take the November 2015 version, which contains 1.2 billion RDF triples, one thousand classes, and 33 thousand properties. In the case of LinkedGeoData, since no root class is defined in the original data, we explicitly add a class that is the parent of all classes previously without a parent in the class hierarchy.

\begin{table}
    \caption{Dataset information \label{tbl:datasets}}
    \vspace{-0.7em}
    \small
    \begin{tabular}{ l l r r r}
    \toprule
    \textbf{Dataset} & \textbf{Version} & \textbf{Triples} & \textbf{Classes} & \textbf{Props} \\ \midrule
    DBpedia & 3.6 & 431,940,462 & 370,082 & 61,944 \\ 
    LinkedGeoData & 2015-11 & 1,216,585,356 & 1,147 & 33,355 \\
    \bottomrule
    \end{tabular}
\end{table}

\paragraph{Queries} Our queries comprise of randomly generated exploration paths that imitate users applying incremental expansions. More specifically, our generator starts with the root class of a graph. At each step, the generator uniformly selects one of the expansion operations, which is translated to a SPARQL query of the form shown in Figure~\ref{fig:general-query}. Next, one of the groups (aka. bar) from the answer is randomly sampled; we apply a weighted sampling according to the size of the group to increase focus on large groups (otherwise since the majority of groups are small, the explorations would fixate on these small groups). The generator continues for four steps or until it gets an empty result. Queries with empty results are ignored and not considered part of the path. We ran this generator 25 times for each graph resulting in a total of 33 distinct non-empty queries (listed in the appendix). The error is computed as the absolute difference between the exact count and estimated count divided by the exact result; consequently, the reported mean error is the average error over all groups in the result.

\paragraph{Machine and Testing Protocol}
Our server has four 2.1~GHz Xeon E5-2683 v4 processors, 500~GB of DDR4 DRAM, and runs Ubuntu~16.04.4 Linux. Each experiment was performed three times, and the average runtime and mean error are reported. We run each online aggregation algorithm for nine seconds and report the estimation after each second. For each query, we tested different join orders of WJ and selected the one with the best mean error.

\begin{figure*}[t]
        \centering
        \begin{subfigure}[b]{0.245\textwidth}
            \includegraphics[width=\textwidth,clip=true]{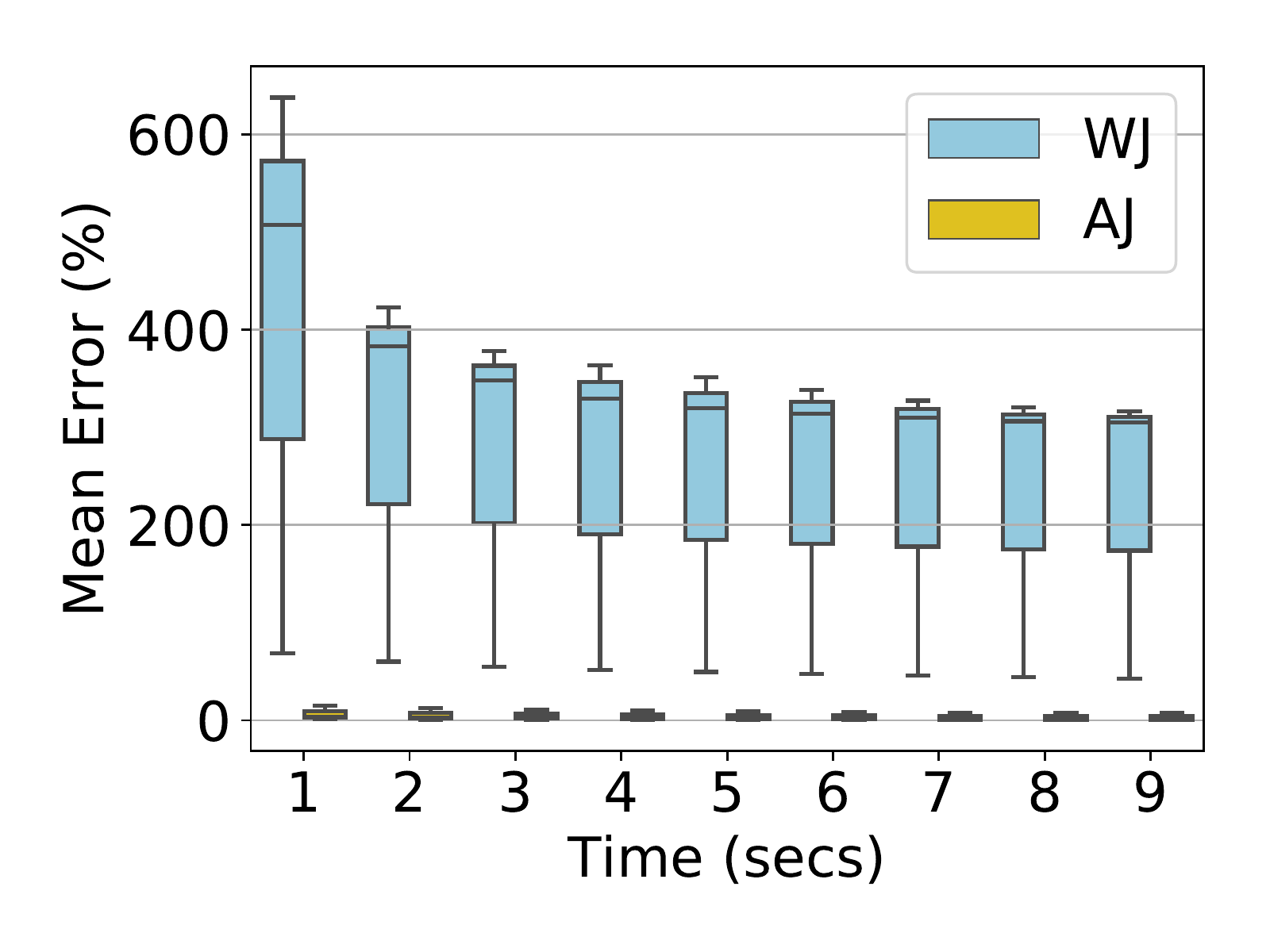}\vskip-1ex
            \caption{Step one queries on DBpedia \label{fig:whisker1d}}
        \end{subfigure}
        \begin{subfigure}[b]{0.245\textwidth}
            \includegraphics[width=\textwidth,clip=true]{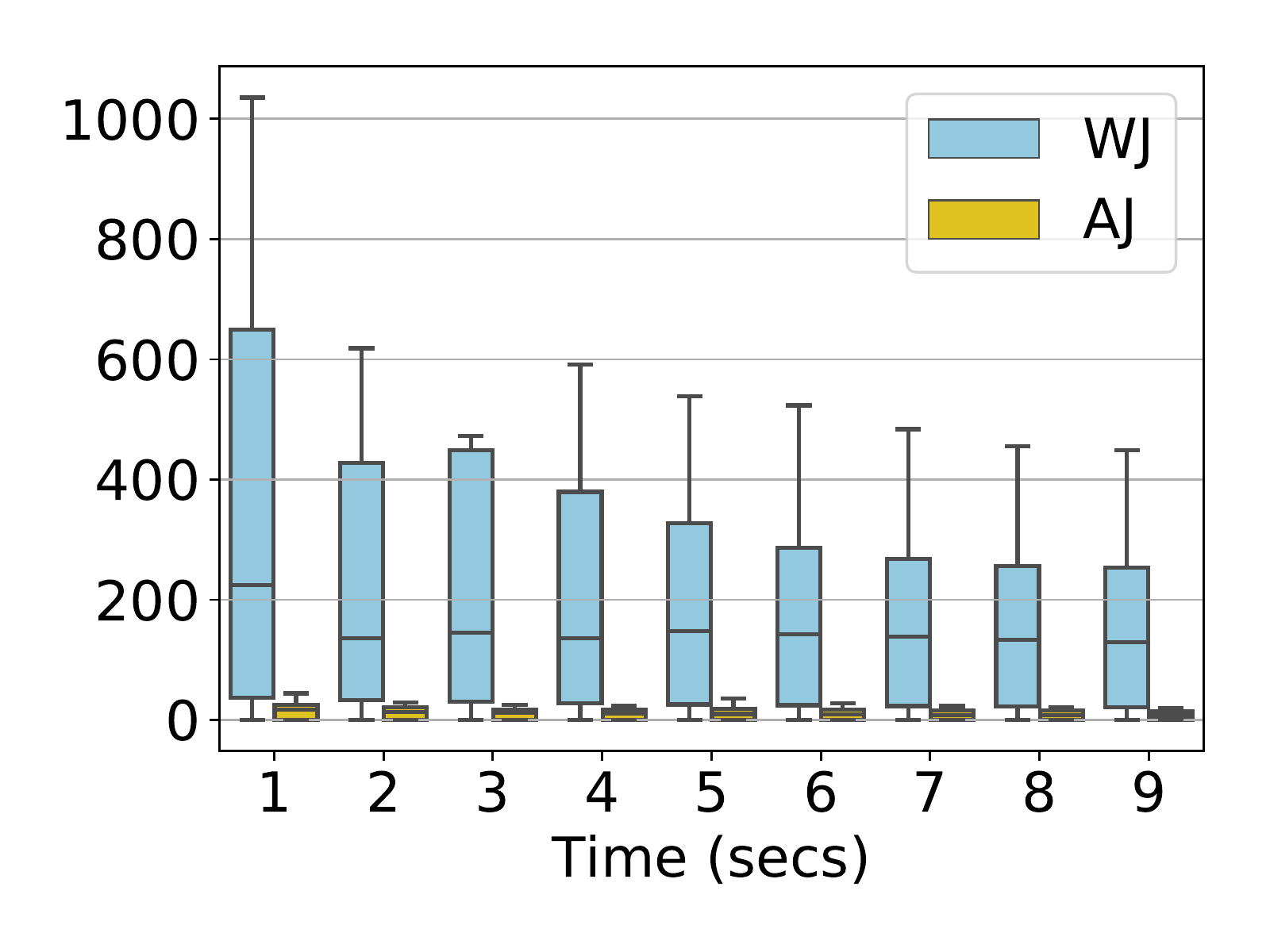}
            \vskip-1ex
            \caption{Step two queries on DBpedia \label{fig:whisker2d}}
        \end{subfigure}
        \begin{subfigure}[b]{0.245\textwidth}
            \includegraphics[width=\textwidth,clip=true]{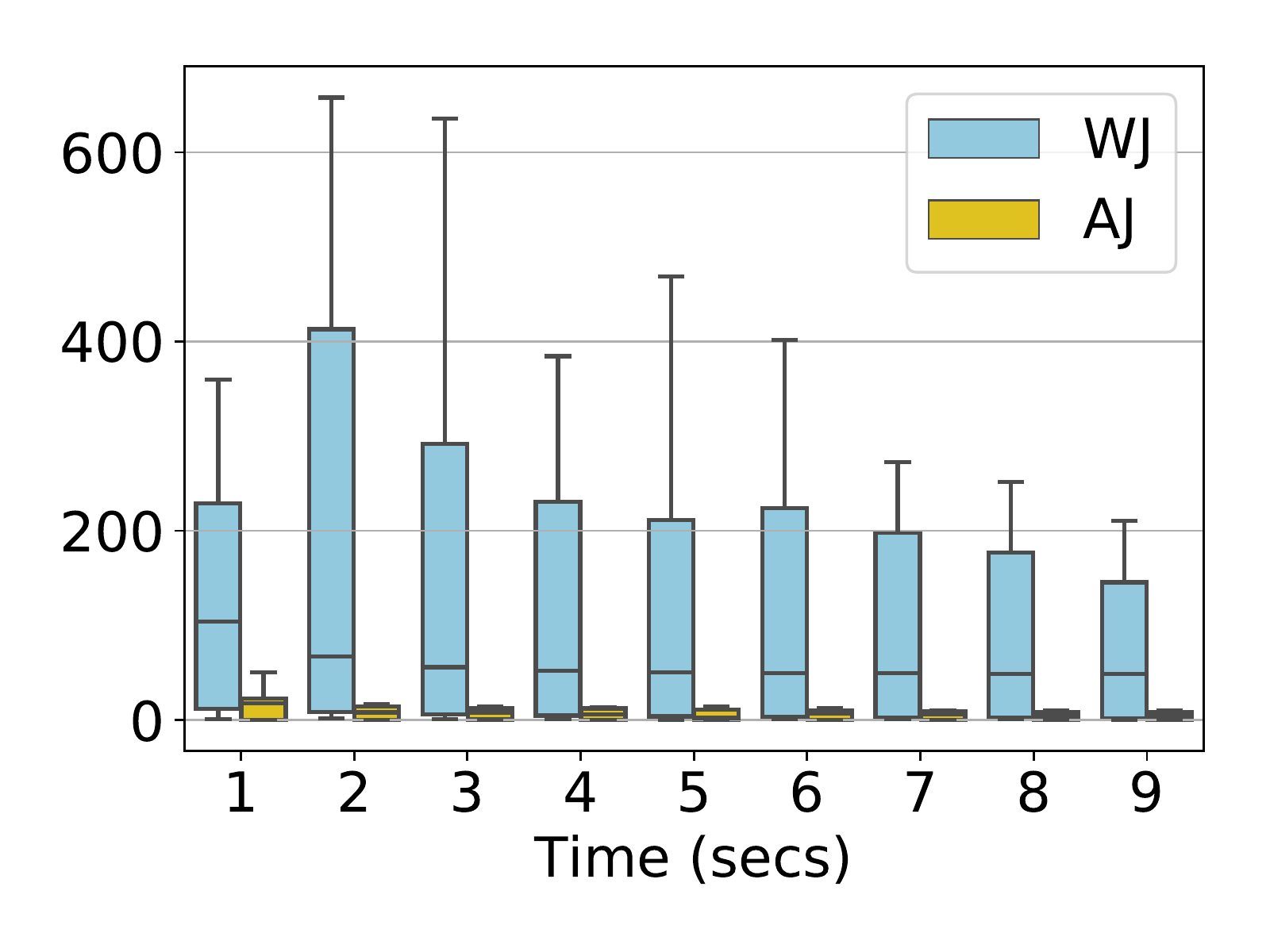}
            \vskip-1ex
            \caption{Step three queries on DBpedia \label{fig:whisker3d}}
        \end{subfigure}
        \begin{subfigure}[b]{0.245\textwidth}
            \includegraphics[width=\textwidth,clip=true]{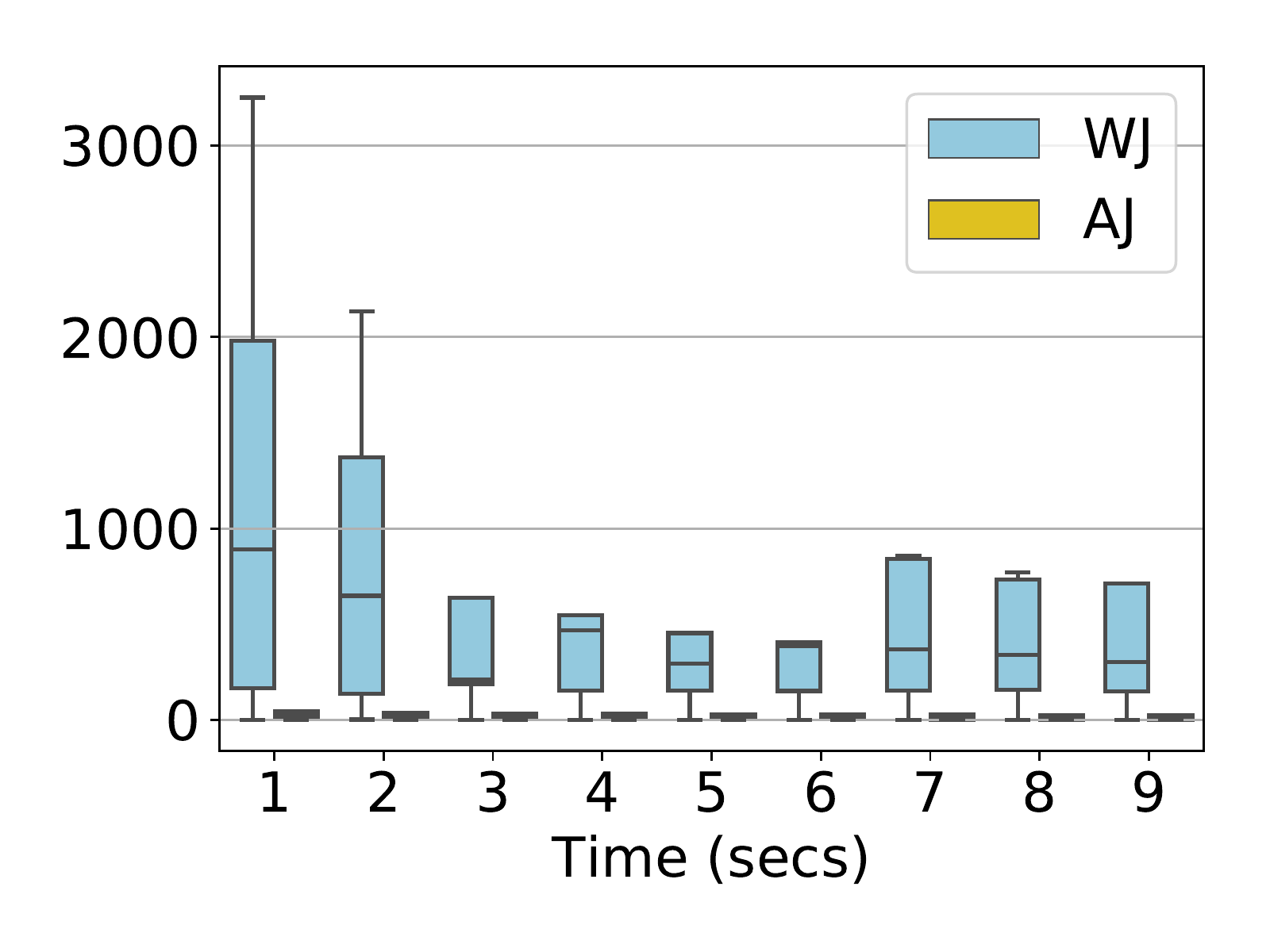}
            \vskip-1ex
            \caption{Step four queries on DBpedia \label{fig:whisker4d}}
        \end{subfigure}

        \par\vskip0.2em

        \begin{subfigure}[b]{0.245\textwidth}
            \includegraphics[width=\textwidth,clip=true]{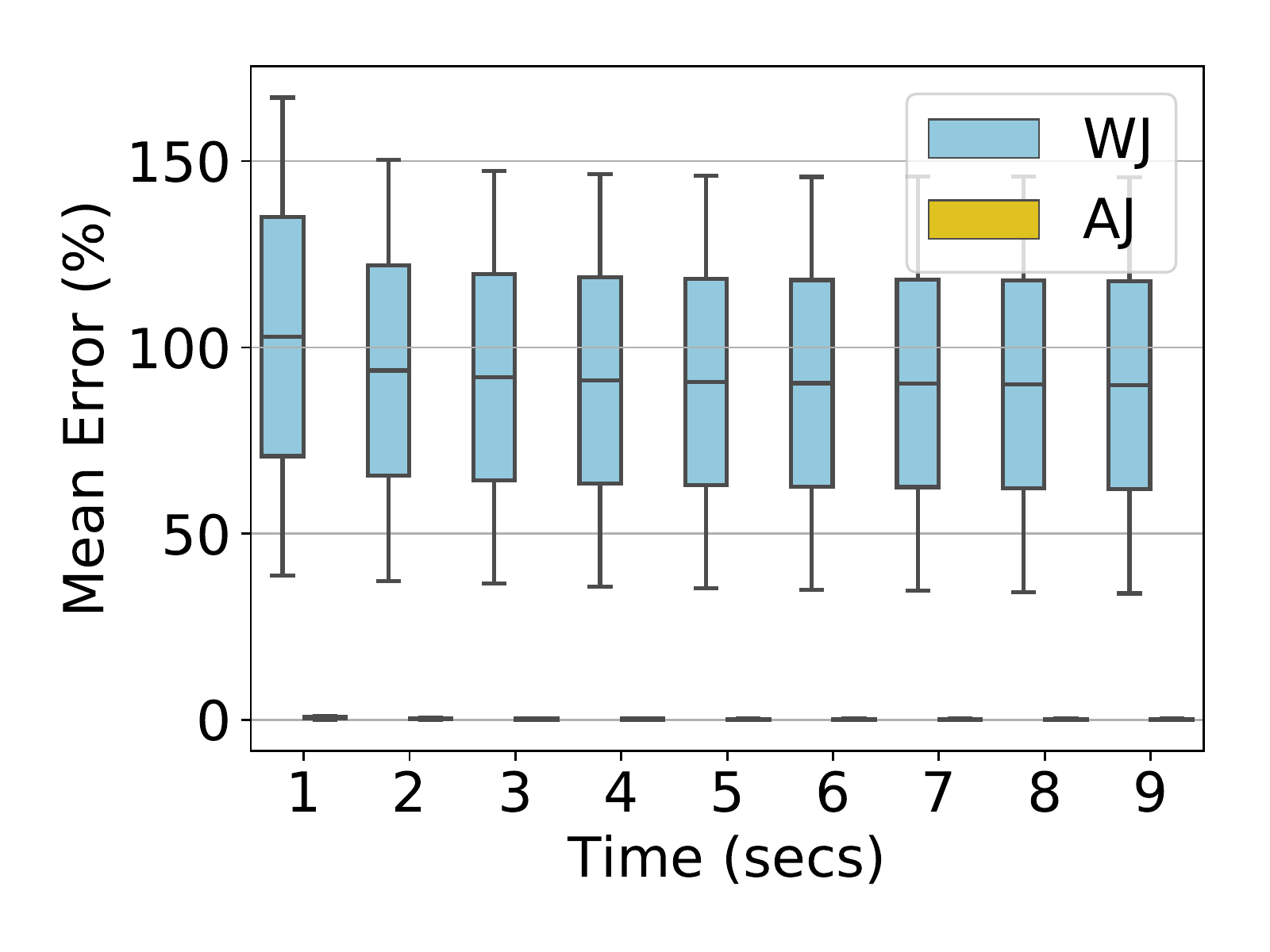}\vspace{-1ex}
            \caption{Step one queries on\\ LinkedGeoData \label{fig:whisker1l}}
        \end{subfigure}
        \begin{subfigure}[b]{0.245\textwidth}
            \includegraphics[width=\textwidth,clip=true]{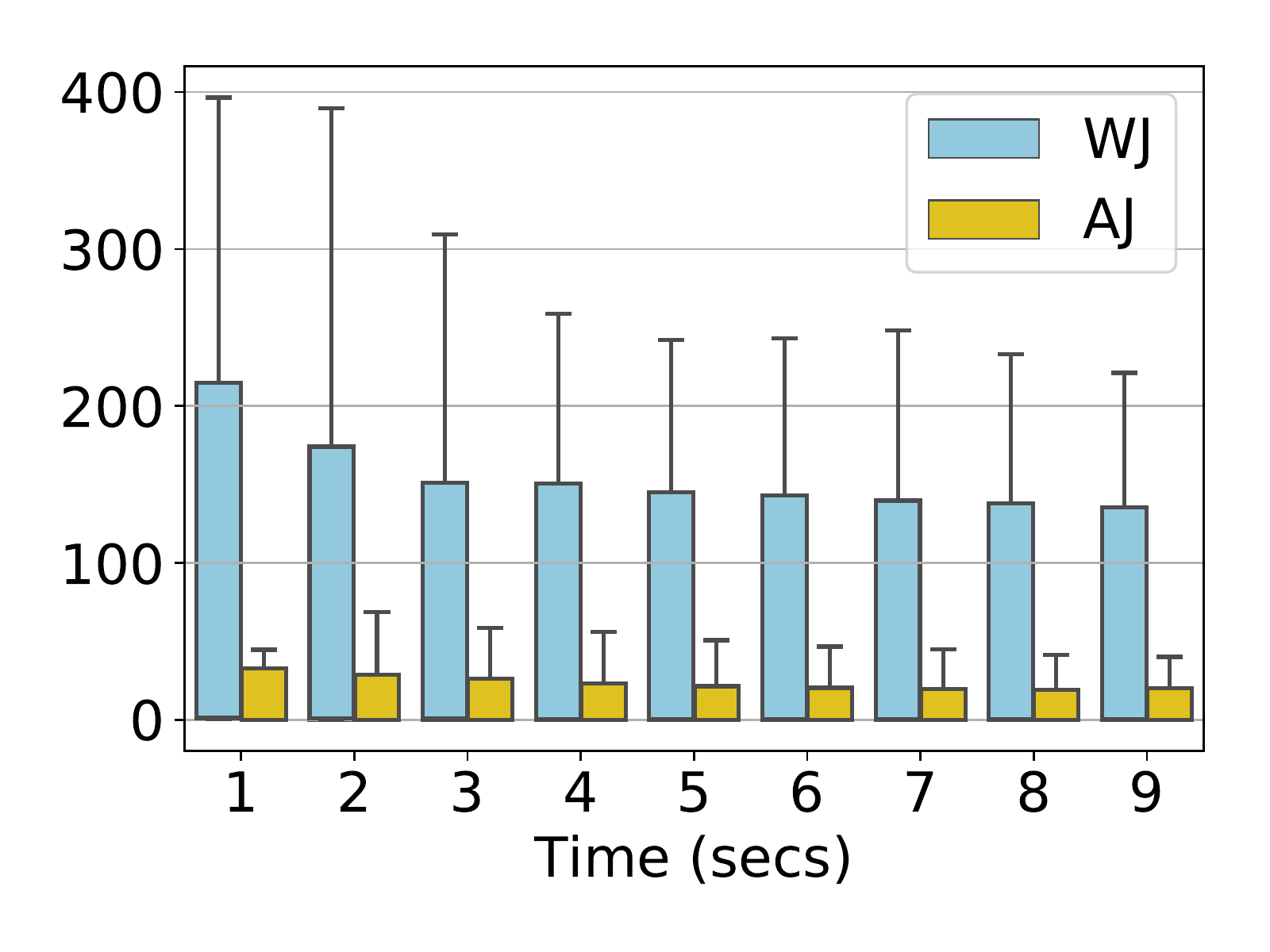}\vspace{-1ex}
            \caption{Step two queries on\\ LinkedGeoData \label{fig:whisker2l}}
        \end{subfigure}
        \begin{subfigure}[b]{0.245\textwidth}
            \includegraphics[width=\textwidth,clip=true]{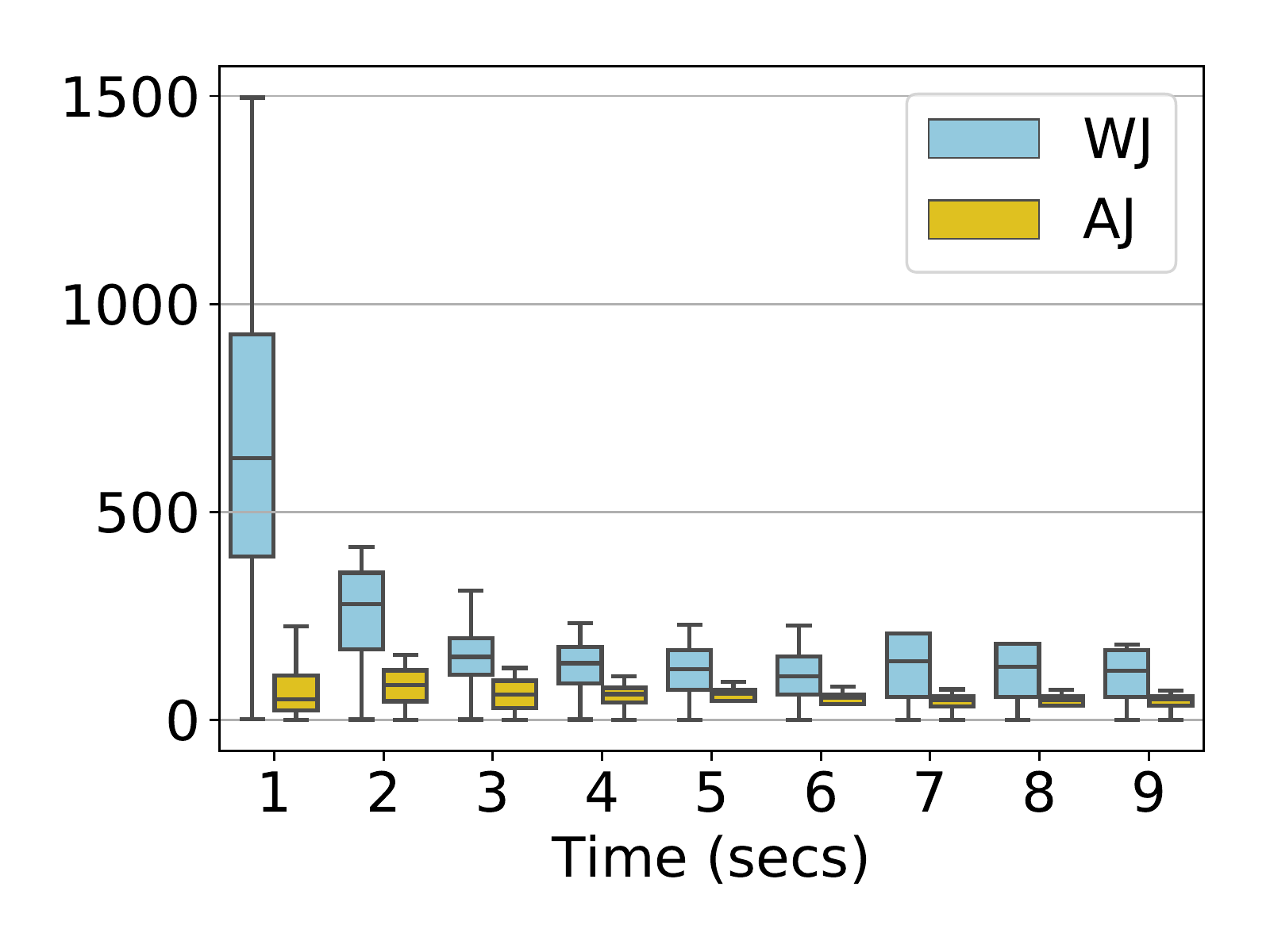}\vspace{-1ex}
            \caption{Step three queries on\\ LinkedGeoData \label{fig:whisker3l}}
        \end{subfigure}
        \begin{subfigure}[b]{0.245\textwidth}
            \includegraphics[width=\textwidth,clip=true]{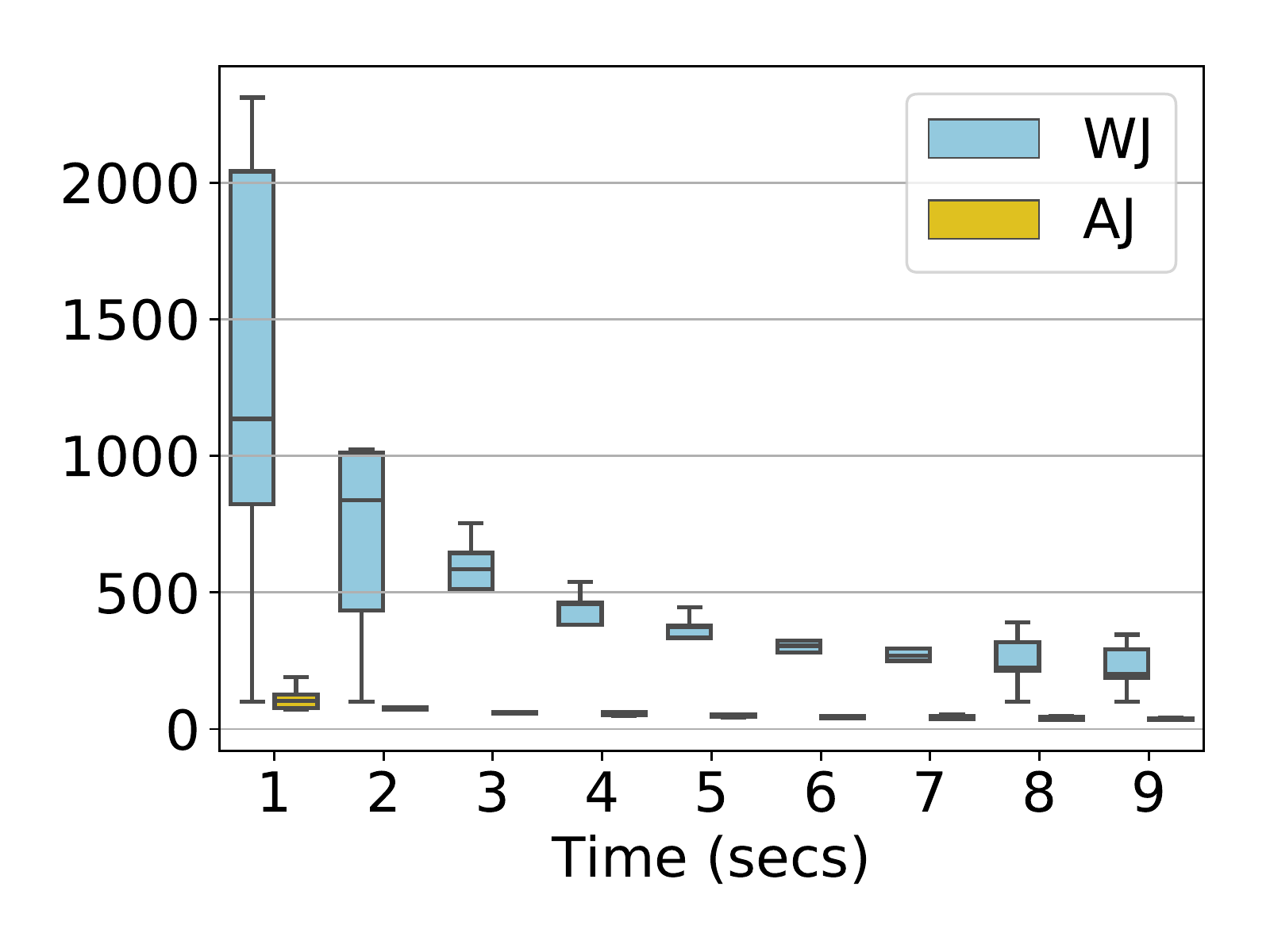}\vspace{-1ex}
            \caption{Step four queries on\\ LinkedGeoData \label{fig:whisker4l}}
        \end{subfigure}

        \vspace*{-1ex}
        \caption{Tukey plots for the mean error over time (seconds) of all queries with a varying number of exploration steps over DBpedia (top) and LinkedGeoData (bottom) \label{fig:whisker}}
\end{figure*}

\subsection{Results}

We first present results on a selection of six queries that help demonstrate different behaviour in all four compared approaches. We then compare the error observed over time for the two online aggregation approaches over all queries, contrasting different exploration depths, the two different datasets, and queries with and without distinct.

\paragraph{Selected queries with distinct:} Figure~\ref{fig:systems} presents the results for a selection of six queries (with distinct) that illustrate a variety of behaviours in the compared approaches. Each graph presents the mean error over time, in seconds, for a specific exploration query. The times for the exact engines, specifically Virtuoso and CTJ, are indicated above the graph. The top row presents results over DBpedia, while the bottom row presents results over LinkedGeoData.

Virtuoso and CTJ take more than a second to answer most queries and take notably longer for the larger dataset: LinkedGeoData (which contains three times more edges). Focusing on Virtuoso, we see that while some queries run in the order of seconds, others run in minutes or even hours: the out-property expansion of \texttt{Thing} (the root class) on LinkedGeoData runs for more than four hours (see  Figure~\ref{fig:systeml1}). On the other hand, CTJ generally offers a major performance improvement over Virtuoso: though slower in one case (see Figure~\ref{fig:systeml2}), in the worst cases, CTJ returns results in tens of seconds; for example, the query that took Virtuoso over 4 hours takes CTJ around 20 seconds. Still, even with the considerable performance improvements offered by CTJ, runtimes in the order of tens of seconds would hurt the interactivity and usability of our exploration system.


When comparing the mean errors of WJ and AJ over time across Figure~\ref{fig:systems}, it is clear that the accuracy and convergence of AJ is significantly better than WJ for the selected queries. These improvements are due to the two extensions that AJ makes over WJ: namely the reduction of rejection rates using CTJ, and the addition of an unbiased estimator for the distinct case (we shall test without distinct in later experiments). We now discuss some individual cases in more detail.

Looking first at DBpedia, for the out-property expansion of \texttt{Thing} (Figure~\ref{fig:systemd1}), the mean error of WJ  is 519\% after one second and 303\% after nine seconds; the corresponding errors for AJ are 7.5\% and 3.7\%, respectively---almost two orders of magnitude improvement compared to WJ. In the object expansion of \texttt{musicalArtist} (Figure~\ref{fig:systemd3}), which originates from the subclass expansion of \texttt{Thing} (Figure~\ref{fig:systemd2}), WJ has a mean error of 163\% and 53\% after one and nine seconds, respectively. While still better than WJ, the accuracy of AJ drops, starting at a mean error of 24\% and reaching 9\% (on the other hand, given the high selectivity of the query, CTJ returns exact results in less than a second).

Moving to the second row of plots in Figure~\ref{fig:systems}, while the larger size of LinkedGeoData notably affects Virtuoso and CTJ, the results for the online aggregation solutions remain similar to DBpedia. As shown in Figure~\ref{fig:systeml1}, WJ's estimation of the out-property expansion on \texttt{Thing} results in a mean error of 167\% after one second, which slowly drops to 145\% after nine seconds; for the same query, AJ estimates the results with a mean error close to 0\% from the first second. The subclass expansion on \texttt{Shop} (Figure~\ref{fig:systeml2}) is quickly well-estimated by both WJ and AJ, resulting in a mean error of 1.9\% and 0.04\% after one second, respectively. On the other hand, estimations worsen in the third query for an out-property expansion of \texttt{Place} (Figure~\ref{fig:systeml3}); still, AJ clearly outperforms WJ in this case.

\begin{figure*}[!ht]
        \centering
        \vspace*{-1ex}
        \begin{subfigure}[b]{0.245\textwidth}
            \includegraphics[width=\textwidth,clip=true]{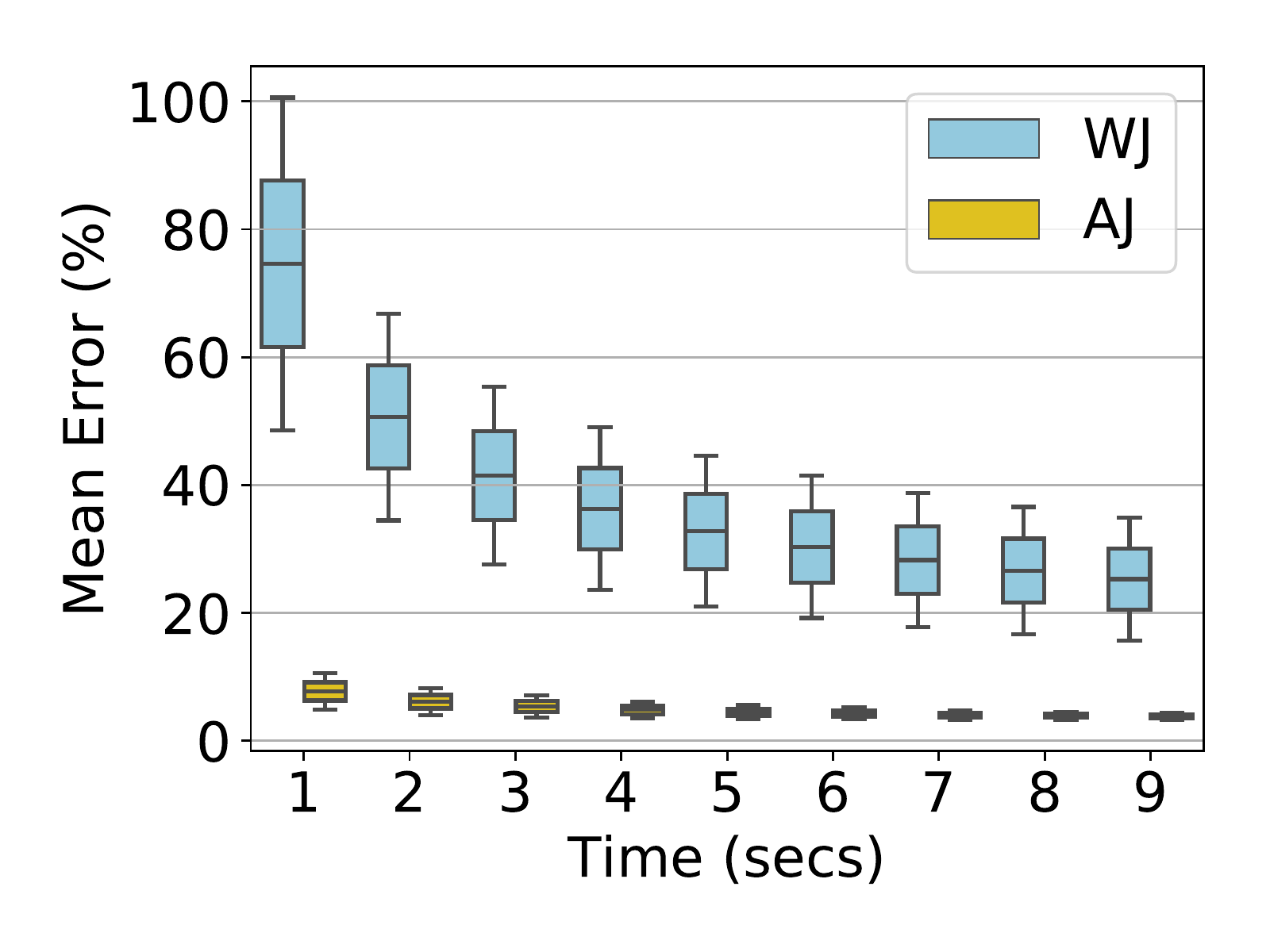}\vspace{-2ex}
            \caption{Step one queries \label{fig:whiskerBag1}}
        \end{subfigure}
        \begin{subfigure}[b]{0.245\textwidth}
            \includegraphics[width=\textwidth,clip=true]{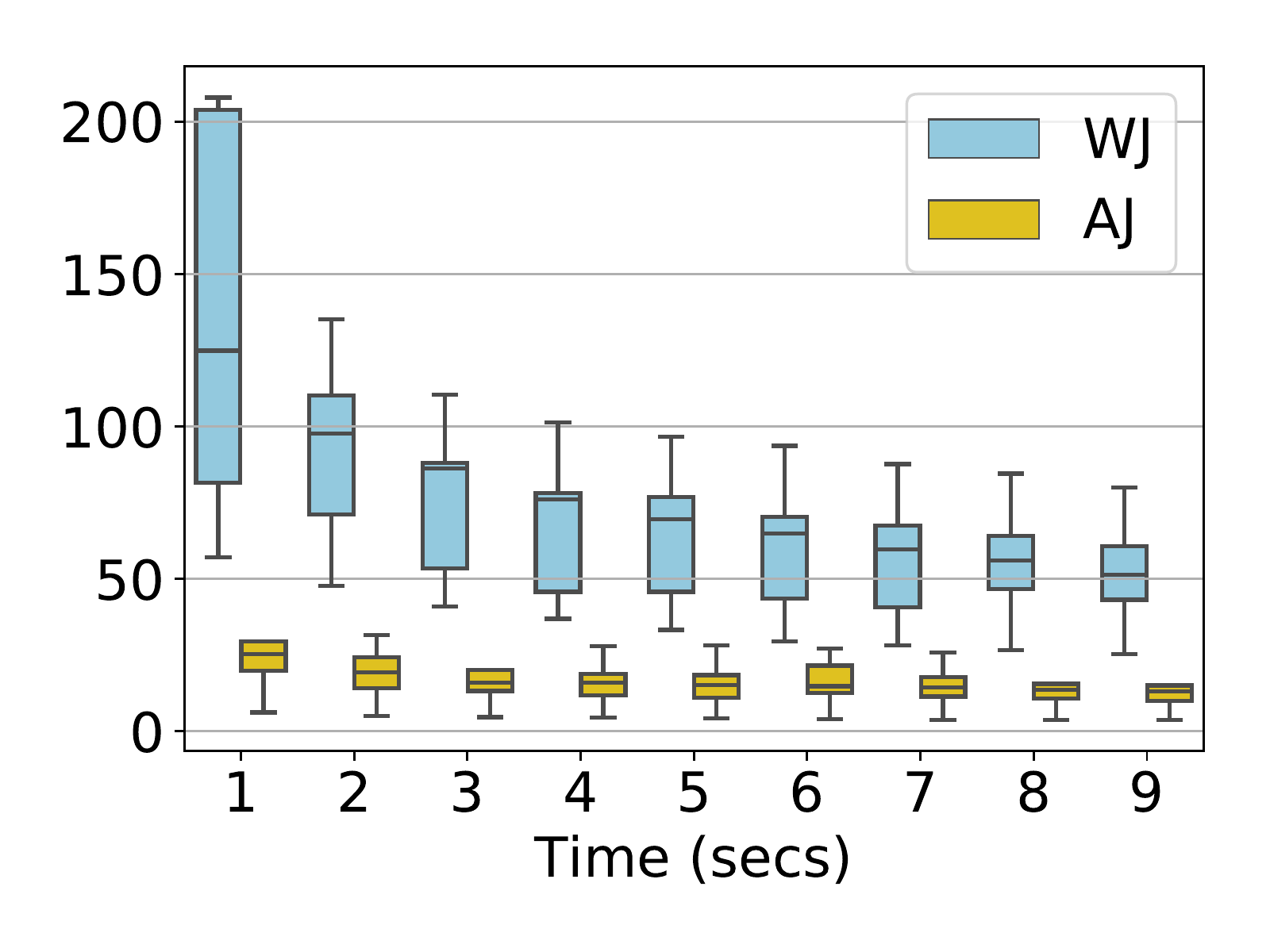}\vspace{-2ex}
            \caption{Step two queries \label{fig:whiskerBag2}}
        \end{subfigure}
        \begin{subfigure}[b]{0.245\textwidth}
            \includegraphics[width=\textwidth,clip=true]{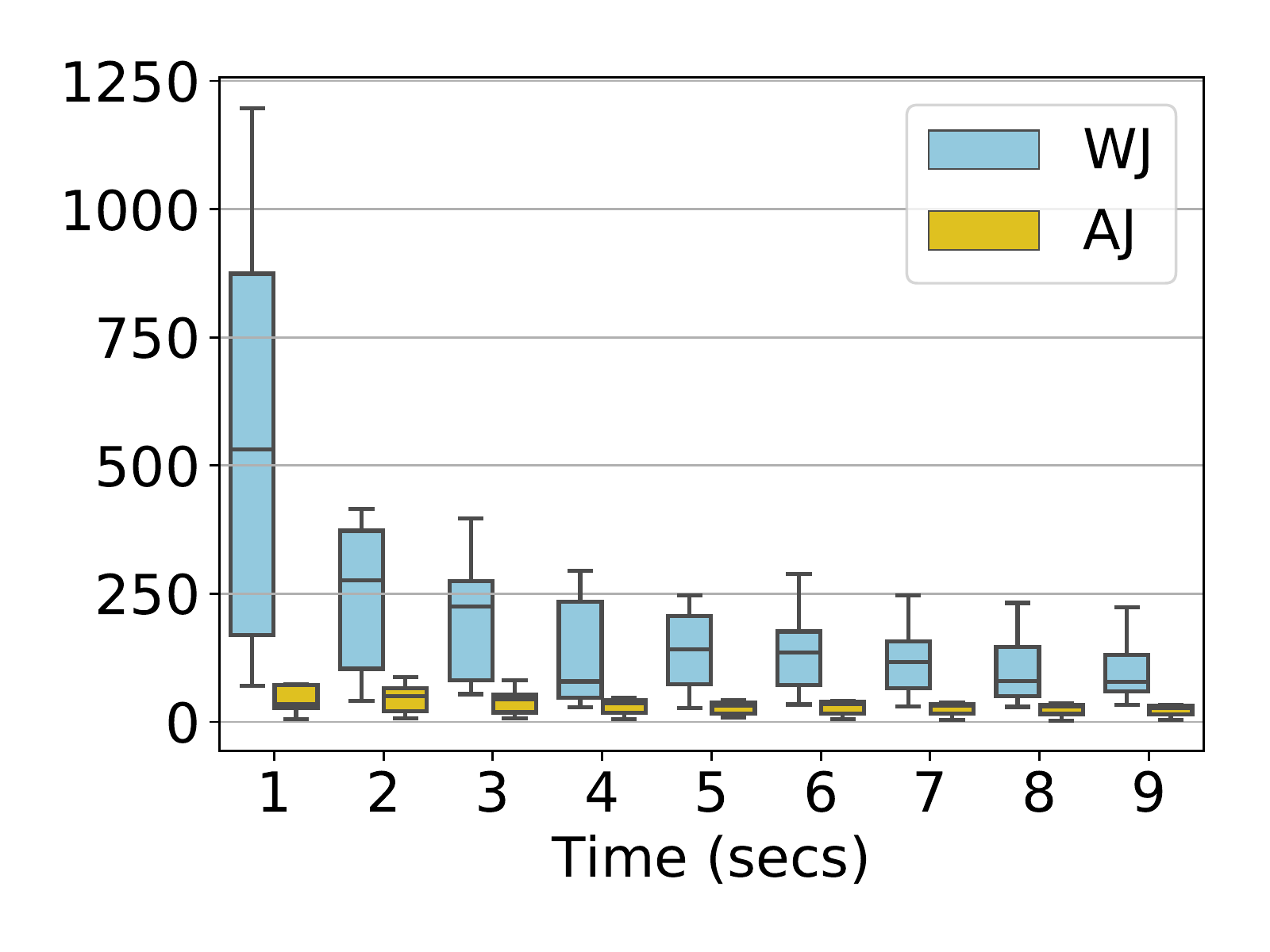}\vspace{-2ex}
            \caption{Step three queries \label{fig:whiskerBag3}}
        \end{subfigure}
        \begin{subfigure}[b]{0.245\textwidth}
            \includegraphics[width=\textwidth,clip=true]{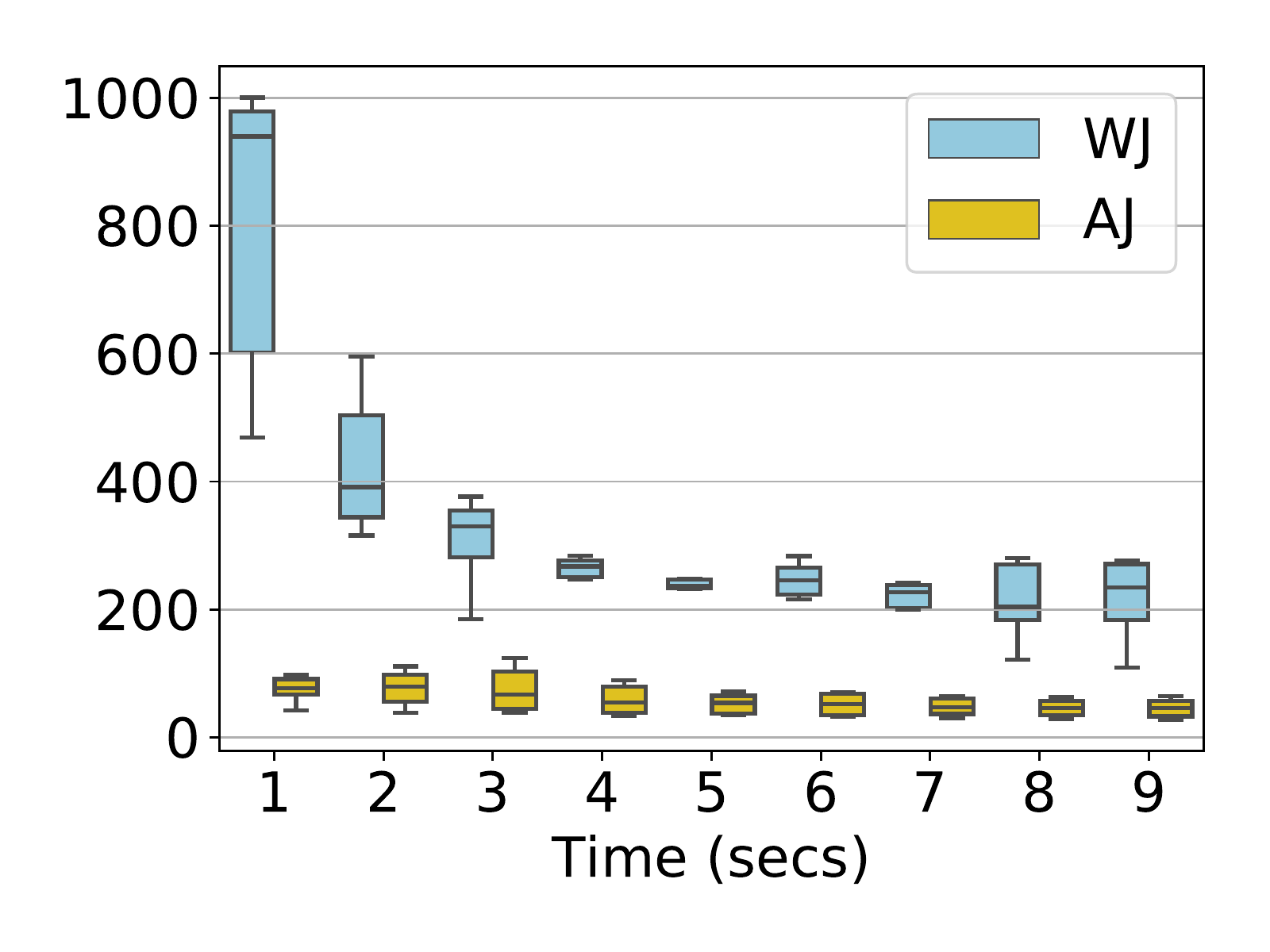}\vspace{-2ex}
            \caption{Step four queries \label{fig:whiskerBag4}}
        \end{subfigure}
        \caption{Error over time in seconds for queries without the distinct operator  \label{fig:whiskerBag}}
\end{figure*}

\paragraph{All queries with distinct:}
For each exploration depth and knowledge graph, Figure~\ref{fig:whisker} presents the range of mean errors for all estimations under the distinct operator as box plots (specifically, Tukey plots), displaying the interquartile range of error (the box), the median error (the line in each box), and the most extreme data point within 1.5$\times$ the interquartile range (the whiskers). These plots show the variation of mean error across all queries with respect to time.

These results show that AJ is consistently better than WJ over all randomly generated queries: the median of mean errors for WJ in some cases reaches over 1,000\% after one second and 300\% after nine seconds (see Figure~\ref{fig:whisker4d}); on the other hand, the same median errors for AJ are at worst 104\% after one second, and 50\% after nine seconds. 


An interesting result can be found when comparing the graphs for a varying numbers of exploration steps. Taking LinkedGeoData, for example, the accuracy of WJ drops when comparing the first step (Figure~\ref{fig:whisker1l}) with 
subsequent steps (Figures~\ref{fig:whisker2l}--\ref{fig:whisker4l} resp.). A similar trend can be seen for DBpedia (though less clear moving from step one to two). We attribute this trend to \e{(1)} an increasing number of rejections: later steps tend become more specific, adding more selective joins; \e{and (2)} increasing duplicates: as the query becomes larger, more variables are projected away. Both issues are specifically addressed by the AJ algorithm.

\paragraph{All queries without distinct:}
Finally, though our exploration system requires the distinct operator for counts, we also perform experiments in the non-distinct case to understand the relative impact of the unbiased distinct estimator and the partial exact computations on reducing error in AJ. Figure~\ref{fig:whiskerBag} presents Tukey plots of mean error over time for all queries without distinct, and both datasets, separated by the number of exploration steps. Overall we conclude that: \e{(1)} the error observed for WJ drops from the distinct case to the non-distinct case; \e{(2)} the errors generally increase in AJ versus the distinct case; \e{and (3)} AJ continues to significantly outperform WJ, though by a lesser margin when considering the non-distinct case. We surmise from these results that though AJ no longer benefits from the unbiased distinct estimator, it continues to clearly outperform WJ in the non-distinct case due to the partial exact computations: the benefits of AJ are not only due to the unbiased distinct estimator.

\paragraph{Summary:} Our first results show that computing exact counts for a selection of exploration queries may take in the order of hours for Virtuoso, and in the order of tens of seconds for CTJ, with both approaches slowing down for the larger LinkedGeoData graph. Though CTJ offers a major performance benefit versus Virtuoso, we conclude that computing exact results (with either approach) is not compatible with our goal of interactive runtimes. Focusing thereafter on online aggregation, we find that AJ significantly reduces error (by orders of magnitude) versus WJ over the same time period, and in most cases can quickly converge to estimates with less than $1\%$ mean error. We also found that the performance of the online aggregation approaches is less affected by the larger size of LinkedGeoData than in the exact settings, and that the relative benefit of AJ and WJ improves as more exploration steps are added. Finally, through experiments on non-distinct queries, we find that AJ continues to significantly reduce error versus WJ due to its inclusion of CTJ for partial exact computations.



\section{Conclusions}
We presented a formal framework for the visual exploration of knowledge
graphs, where an exploration step consists of the transformation of
the bar of one bar chart into the next bar chart. We also investigated
the implementation of this framework using various query engines for
rendering bar charts, including a SPARQL engine, a recent in-memory
join algorithm, and online aggregation. Finally, we devised and
analyzed Audit Join: a specialized online-aggregation
algorithm that combines the random walks of Wander Join with exact
computation, and extends its estimator to accommodate the
count aggregations under the distinct operator. Our experiments show that the runtimes of both methods for performing exact counts are too slow for supporting interactive exploration over large-scale knowledge graphs. Focusing thereafter on online aggregation, we find that when compared with Wander Join, Audit Join significantly reduces error with respect to time in all experiments, often by orders of magnitude, including both distinct and non-distinct cases.

In terms of future directions, for one, we wish to arrive at a more general understanding of how partial exact computation complements online aggregation in order to devise a more principled way to combine both; our results, along with those of Zhao et al.~\cite{Zhao:2018:RSO:3183713.3183739}, show this to be a very promising approach in general, and more work can be done to refine this idea. We will also explore the application of our approach to general join queries, beyond the path queries produced by our system. We also wish to explore directions for improving the usability of our exploration system, which include: allowing to explore and contrast multiple knowledge graphs simultaneously, adding support for incremental indexing on updates, extending filtering capabilities, and adding support for further semantics beyond subclass closure. Finally, we plan to conduct user studies to increase the potential of our system for exploring large-scale knowledge graphs.

	\balance


	\bibliographystyle{abbrv}
	\bibliography{ref} 
	\appendix
\def\query#1{#1}
\def\expname#1{\textit{#1}}
\def\query#1{\small #1}
\def\subclss#1{\textsf{\expname{SubC}[#1]}}
\def\inProp#1{\textsf{\expname{InProp}[#1]}}
\def\outProp#1{\textsf{\expname{OutProp}[#1]}}
\def\objects#1{\textsf{\expname{Obj}[#1]}}
\def\subjects#1{\textsf{\expname{Sbj}[#1]}}
\def\path#1#2#3#4#5{#2 -> #3 -> #4 -> #5 \\}

\def\ra{\rightarrow}
\def\path#1#2#3#4#5{#2 $\ra$ #3 $\ra$ #4 $\ra$ #5}
\def\paththree#1#2#3#4{#2 $\ra$ #3 $\ra$ #4}
\def\pathtwo#1#2#3{#2 $\ra$ #3}
\def\pathone#1#2{#2}

\section{Random Queries}
In this section, we list the queries produced by our random generator
for the experiments.

\subsection{DBPedia}
 We use the following namespace acronyms:

\begin{itemize}\itemsep0.5em 
\item \textsf{rdf}: \textsf{http://www.w3.org/1999/02/22-rdf-syntax-ns}
\item \textsf{owl}: \textsf{http://www.w3.org/2002/07/owl}
\item \textsf{dbo}: \textsf{http://dbpedia.org/ontology}
\item \textsf{dbp}: \textsf{http://dbpedia.org/property}
\item \textsf{foaf}: \textsf{http://xmlns.com/foaf/0.1}
\item \textsf{yago}: \textsf{http://dbpedia.org/class/yago}
\end{itemize}
The exploration paths are the following. Each step is a query, and we disregard duplicates of queries in our experiments.
\begin{enumerate} \itemsep0.5em 
\item \pathtwo{dbpedia}{\query{
\inProp{owl:Thing}}}{\query{\subjects{dbo:wikiPageRedirects}}} 
\item \pathtwo{dbpedia}{\query{ \inProp{owl:Thing}}}{\query{\subjects{foaf:primaryTopic}}}
\item \path{dbpedia}{\query{ \outProp{owl:Thing} }}{\query{ \objects{dbo:musicalArtist} }\\}{\query{ \subclss{dbo:Person} }}{\query{ \subclss{dbo:Artist} }}
\item \path{dbpedia}{\query{ \outProp{owl:Thing} }}{\query{ \objects{dbo:musicalArtist} }\\}{\query{ \subclss{dbo:Person} }}{\query{ \subclss{dbo:Artist} }}
\item \pathtwo{dbpedia}{\query{ \outProp{owl:Thing} }}{\query{ \objects{rdf:type} }}
\item \pathtwo{dbpedia}{\query{ \subclss{owl:Thing} }}{\query{ \inProp{dbo:Agent} }}
\item \pathtwo{dbpedia}{\query{ \subclss{owl:Thing} }}{\query{ \inProp{dbo:TimePeriod} }}
\item \pathtwo{dbpedia}{\query{ \subclss{owl:Thing} }}{\query{ \outProp{dbo:Agent} }}

\item \paththree{dbpedia}{\query{ \subclss{owl:Thing} }}{\query{ \outProp{dbo:MeanOfTransportation} }}{\query{ \objects{dbo:length} }}
\item \pathtwo{dbpedia}{\query{ \subclss{owl:Thing} }}{\query{ \outProp{dbo:Species} }}
\item \paththree{dbpedia}{\query{ \subclss{owl:Thing} }}{\query{ \outProp{dbo:TimePeriod}\\ }}{\query{ \objects{rdf\#type} }}

\item \paththree{dbpedia}{\query{ \subclss{owl:Thing} }}{\query{ \outProp{dbo:Work} \\}}{\query{ \objects{dbp:background} }}
\item \paththree{dbpedia}{\query{ \subclss{owl:Thing} }}{\query{ \outProp{dbo:Work} \\}}{\query{ \objects{dbp:distributor} }}

\item \path{dbpedia}{\query{ \subclss{owl:Thing} }}{\query{ \outProp{dbo:Work}\\ }}{\query{ \objects{dbp:guests} }}{\query{ \subclss{yago:Person100007846} }}
\item \paththree{dbpedia}{\query{ \subclss{owl:Thing} }}{\query{ \outProp{dbo:Work} }}{\query{ \objects{rdf\#type} }}

\item \path{dbpedia}{\query{ \subclss{owl:Thing} }}{\query{ \subclss{dbo:Place}\\ }}{\query{ \subclss{dbo:ArchitecturalStructure}\\ }}{\query{ \outProp{dbo:Infrastructure} }}
\item \path{dbpedia}{\query{ \subclss{owl:Thing} }}{\query{ \subclss{dbo:Place} \\}}{\query{ \subclss{dbo:NaturalPlace} }}{\query{ \outProp{dbo:Volcano} }}

\item  \pathtwo{dbpedia}{\query{ \subclss{owl:Thing} }}{\query{ \subclss{dbo:TimePeriod}\\ }}
\item  \path{dbpedia}{\query{ \subclss{owl:Thing} }}{\query{ \subclss{dbo:Work}\\ }}{\query{ \outProp{dbo:Document} }}{\query{ \objects{rdf:type} }}
\end{enumerate}

\subsection{LinkedGeoData}
 We use the following namespace acronym:
 \begin{itemize}\itemsep0.5em 
\item \textsf{lgdo}: \textsf{http://linkedgeodata.org/ontology/}
\end{itemize}

The exploration paths are the following. Each step is a query, and we disregard duplicates of queries in our experiments. \inProp{$\cdot$} expansions of LinkedGeoData always returned an empty result.
\begin{enumerate} \itemsep0.5em
\item \pathone{ linkedgeodata }{\query{ \outProp{ConnectTrees} }}
\item \pathtwo{ linkedgeodata }{\query{ \subclss{ConnectTrees} }}{\query{ \outProp{lgdo:BarrierThing} }}
\item \pathtwo{ linkedgeodata }{\query{ \subclss{ConnectTrees} }}{\query{ \outProp{lgdo:ManMadeThing} }}
\item \pathtwo{ linkedgeodata }{\query{ \subclss{ConnectTrees} }}{\query{ \outProp{lgdo:Place} }}
\item \path{ linkedgeodata }{\query{ \subclss{ConnectTrees} }}{\query{ \subclss{lgdo:Amenity}\\ }}{\query{ \subclss{lgdo:Shop} }}{\query{ \outProp{lgdo:Convenience} }}
\item \path{ linkedgeodata }{\query{ \subclss{ConnectTrees} }}{\query{ \subclss{lgdo:Amenity}\\ }}{\query{ \subclss{lgdo:Shop} }}{\query{ \outProp{lgdo:Craft} }}
\item \path{ linkedgeodata }{\query{ \subclss{ConnectTrees} }}{\query{ \subclss{lgdo:Amenity}\\ }}{\query{ \subclss{lgdo:Shop} }}{\query{ \outProp{lgdo:Kiosk} }}
\item \path{ linkedgeodata }{\query{ \subclss{ConnectTrees} }}{\query{ \subclss{lgdo:Amenity}\\ }}{\query{ \subclss{lgdo:Shop} }}{\query{ \outProp{lgdo:Pharmacy} }}
\item \path{ linkedgeodata }{\query{ \subclss{ConnectTrees} }}{\query{ \subclss{lgdo:Amenity}\\ }}{\query{ \subclss{lgdo:Shop} }}{\query{ \outProp{lgdo:Supermarket} }}
\item \paththree{ linkedgeodata }{\query{ \subclss{ConnectTrees} }}{\query{ \subclss{lgdo:BarrierThing} \\}}{\query{ \outProp{lgdo:Fence} }}
\item \pathtwo{ linkedgeodata }{\query{ \subclss{ConnectTrees} }}{\query{ \subclss{lgdo:BarrierThing}\\ }}
\item \paththree{ linkedgeodata }{\query{ \subclss{ConnectTrees} }}{\query{ \subclss{lgdo:Place} \\}}{\query{ \outProp{lgdo:Village} }}
\item \pathtwo{ linkedgeodata }{\query{ \subclss{ConnectTrees} }}{\query{ \subclss{lgdo:PowerThing}\\ }}
\item \paththree{ linkedgeodata }{\query{ \subclss{ConnectTrees} \\}}{\query{ \subclss{lgdo:RailwayThing} }}{\query{ \outProp{lgdo:BufferStop} }}
\end{enumerate}

\end{document}